\documentclass[11pt,onecolumn,draftcls]{IEEEtran}

\usepackage[english]{babel}
\usepackage{algorithm} 
\usepackage{algorithmic} 
\usepackage{multirow}
\usepackage{graphicx}

\usepackage{amscd,amsmath,amssymb,amsthm}

\newtheorem{prop}{Proposition}
\usepackage[bookmarks]{hyperref}

\usepackage{caption2}

\begin{document}
	%
\title{Autonomous and Connected Intersection Crossing Traffic Management using Discrete-Time Occupancies Trajectory}

	%
	%
	
	\author{Qiang~Lu,
		and~Kyoung-Dae~Kim,~\IEEEmembership{Member,~IEEE}

		\thanks{Qiang Lu and Kyoung-Dae Kim are with the Department
			of Electrical and Computer Engineering, University of Denver, Denver,
			CO, 80210 USA e-mails: qiang.lu@du.edu, kyoung-dae.kim@du.edu.}
		}

	\maketitle
	
	\begin{abstract}
	In this paper, we address a problem of safe and efficient intersection crossing traffic management of autonomous and connected ground traffic. Toward this objective, we propose an algorithm that is called the Discrete-time occupancies trajectory based Intersection traffic Coordination Algorithm (DICA). We first prove that the basic DICA is deadlock free and also starvation free. Then, we show that the basic DICA has a computational complexity of $\mathcal{O}(n^2 L_m^3)$ where $n$ is the number of vehicles granted to cross an intersection and $L_m$ is the maximum length of intersection crossing routes. 
	To improve the overall computational efficiency of the algorithm, the basic DICA is enhanced by several computational approaches that are proposed in this paper. The enhanced algorithm has the computational complexity of $\mathcal{O}(n^2 L_m \log_2 L_m)$. The improved computational efficiency of the enhanced algorithm is validated through simulation using an open source traffic simulator, called the Simulation of Urban MObility (SUMO). The overall throughput as well as the computational efficiency of the enhanced algorithm are also compared with those of an optimized traffic light control.
	\end{abstract}
	
	\begin{IEEEkeywords}
		Autonomous Vehicles, Intelligent Intersection Management, Discrete-Time Occupancies Trajectory (DTOT), Computational Complexity
	\end{IEEEkeywords}

	%
	\IEEEpeerreviewmaketitle
	
\newpage
\listoffigures
\clearpage

\section{Introduction} \label{sec:intro}

Over the past decades, the vision for autonomous vehicles and autonomous ground traffic systems has indeed attracted a lot of attention and has catalyzed unprecedented research and development efforts from academia, industry, government, etc. Some examples are the California PATH Automated Highway System (AHS) program \cite{horowitz2000control} during the mid of 1990s and also the series of DARPA (Defense Advanced Research Projects Agency) Challenges \cite{darpa} that have happened during the 2000s. Many automobile companies are also investing huge amounts of money in developing their own self-driving vehicles or vehicles with many advanced driving assistance systems \cite{bengler2014three}. However, despite many recent successful road testing results of several self-driving cars such as Google driverless car \cite{markoff2010google}, it is hard to argue that the overall system-wide traffic safety as well as throughput will be improved substantially when we have a few autonomous vehicles among all other conventional vehicles. 
In fact, the potential of autonomous vehicles in terms of the traffic efficiency and safety will be unleashed when most cars on roads are autonomous and connected. 
Thus, in addition to many efforts to make today's traffic more efficient by improving utilization of traditional traffic infrastructure such as the work presented in \cite{chen}, we believe that it is also very important to develop traffic control algorithms that take advantages of connectivity and autonomy of autonomous vehicles to prepare for the next generation transportation system. 
However, while there have been many efforts toward this direction, the development of safe and efficient autonomous transportation systems is still at its early stage. In this paper, among many research problems like vehicle path planning \cite{konar}, autonomous parking control \cite{li2003autonomous}, collision avoidance \cite{mammeri,colombo}, relation between occupant experience and intersection capacity \cite{le2015autonomous}, intersection management of mixed traffic \cite{onieva2015multi} etc. that should be addressed toward this objective, we are particularly interested in addressing a problem of safe and efficient intersection crossing traffic management of autonomous connected traffic since intersections are certainly the most critical traffic environments from the perspective of safety as well as throughput.

In literature, there are a number of notable results for autonomous intersection crossing traffic management. 
In \cite{lee2012development}, Lee et al. proposed an algorithm, called the Cooperative Vehicle Intersection Control (CVIC), which manipulates every individual vehicle's driving motion by providing them proper acceleration or deceleration rate so that vehicles can cross the intersection safely. Wu et al. \cite{wu2012cooperative} introduced a new intersection traffic management framework that is formulated as a combinatorial optimization problem and solved the problem approximately using the ant colony system algorithm \cite{dorigo1996ant}. Most of them are centralized approaches in which control decisions are made typically by a central agent. Decentralized intersection control approaches have also been proposed in literature. For example, \cite{malikopoulos2016decentralized} formulated a decentralized framework whereby each autonomous vehicle minimizes its energy consumption under the throughput-maximizing timing constraints and hard safety constraints to avoid rear-end and lateral collisions. A complete analytical solution of the decentralized problems was presented in the paper. These approaches are similar in that they all ensure safety within an intersection by preventing vehicles with conflicting intersection crossing routes from being inside the intersection at the same time. 
To further improve the overall intersection crossing traffic throughput, some researchers eliminated this conservative restriction by discretizing an intersection space so that vehicles can exist simultaneously within an intersection but not within a same discretized space within the intersection. 
The representative approach is the reservation-based approach AIM (Autonomous Intersection Management) proposed in \cite{dresner2008multiagent}. 
In AIM, cars request and receive time slots from the intersection during which they may pass. Similar and improved approaches \cite{li2013modeling, jin2012advanced,wuthishuwong2015safe} were also proposed afterwards. For example, \cite{li2013modeling} proposed ASL (Advance Stop Location) concept which is a predefined advance stop location other than the traditional stop line at the entrance of an intersection for a vehicle with rejected reservation. The slow-reservation-speed issue which increases the total traversal time within the intersection could be improved by the ASL.
Representative centralized approaches also include auction-based intersection managements proposed in \cite{carlino2013auction, vasirani2012market}. A decentralized approach based on a vehicle-to-vehicle (V2V) coordination protocol was proposed in \cite{azimi2013reliable}. Roughly speaking, these approaches are all based on the grid cell partitioning of an intersection space. 
In \cite{dresner2008multiagent}, the effect of the grid cell granularity on the computational efficiency of an intersection traffic management framework such as AIM was studied. Clearly, higher granularity gives more flexibility for better traffic throughput. However, the computational complexity increases proportionally to the square of the granularity. On the other hand, when the cell size becomes large for better computational efficiency, one can see that the intersection space is not utilized efficiently resulting in lower traffic throughput. 
Therefore, to overcome this trade-off issue between the granularity and computational efficiency of an algorithm, it might be a good alternative approach to utilize each vehicle's actual occupancy instead of grid cells to improve the overall traffic throughput. And this has motivated our research on this topic.
Some other research works on autonomous intersection management can be found in \cite{wu2012cooperative, kim2014mpc, kowshik2011provable}. 

As an approach to address the above mentioned granularity issue, we proposed a novel intersection traffic management scheme in our earlier work \cite{lu2016intelligent} based on the idea of the \textit{Discrete-Time Occupancies Trajectory (DTOT)}.  Conceptually, a DTOT is a discrete-time sequence of a vehicle's actual occupancy within an intersection space. Hence, a DTOT-based intersection management scheme can utilize the intersection area much more efficiently than other grid partition based approaches. Furthermore, the proposed interaction mechanism between an intersection and vehicles  allows the flexibility that each vehicle can choose its path as well as motions along the path that a vehicle wants to take to cross an intersection. The management scheme is only dealing with head vehicles which reduces largely the communication needs for vehicles and the computational complexity of the central control agent.
In this paper, we provide an in-depth analysis of the original DTOT-based Intersection Coordination Algorithm (DICA) to show that it satisfies the liveness property in terms of deadlock as well as starvation issues and also to derive the overall computational complexity of the algorithm. Another contribution of this paper is that we propose several computational approaches to improve the overall computational efficiency of the DICA and also enhance the algorithm accordingly so that it can be operated in real-time for autonomous and connected intersection crossing traffic management. We also present simulation results that show the improved computational efficiency of the enhanced algorithm and the overall throughput performance in comparison with that of an optimized traffic light control.

The rest of this paper is organized as follows: 
In Section \ref{sec:dtot}, we introduce the main ideas, concepts, assumptions, notations, and also the basic algorithm, called DICA in short, developed for the DTOT-based intersection crossing traffic management. In Section \ref{sec:anal}, we show that the basic DICA is deadlock and also starvation free. In this section, we also discuss in detail about the computational complexity of the algorithm. Several approaches to improve the overall computational efficiency of the algorithm are discussed in Section \ref{sec:improv}. The overall computational efficiency as well as the throughput performance of the enhanced algorithm are evaluated through simulations in Section \ref{sec:sim}. Finally, Section \ref{sec:conc} concludes this paper.

\section{DTOT-based Intersection Traffic Management} \label{sec:dtot}
\begin{figure} [!t]
	\centering
	\includegraphics[width=4.in]{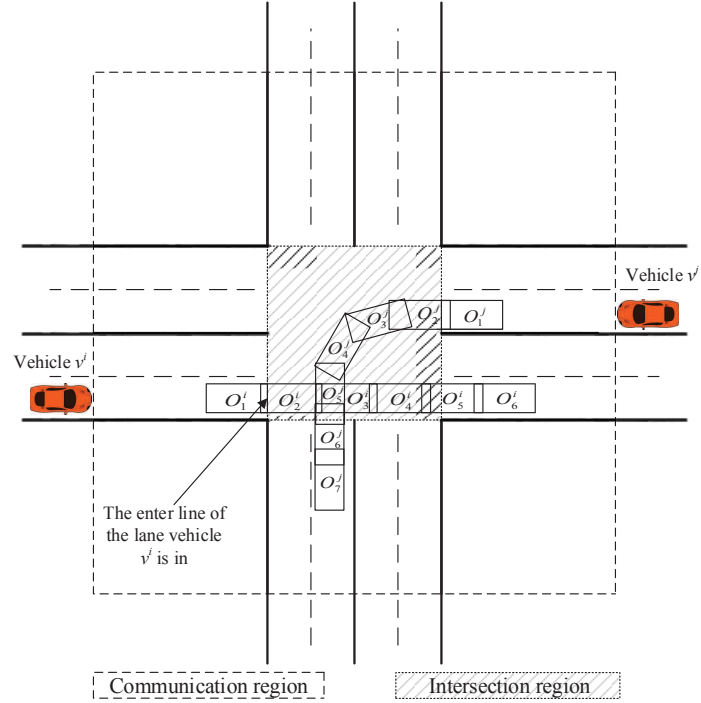}
	\caption[DTOTs of two conflicting vehicles]{DTOTs of two conflicting vehicles. ($O^p_q$ represents the $q$-th occupancy in a vehicle $v^p$'s DTOT. Note that occupancies in this figure are intentionally made very sparse for clear illustration purpose. DTOT starts with the occupancy in which the vehicle's front bumper first contacts the enter line of its lane of an intersection, and ends with the occupancy that the vehicle is completely out of the intersection region.)}
	\label{fig:example}
\end{figure}

In this section, we introduce the basic idea and algorithm of the DTOT-based intersection management scheme that is developed for autonomous and connected intersection crossing traffic in which all vehicles are autonomous vehicles (AVs) and capable of wireless vehicular communication. We assume that an intersection is also equipped with wireless communication capability as well as a computation unit so that it can exchange information with vehicles and perform necessary computations to coordinate vehicles to cross the intersection safely. At an intersection, there is no traffic light that controls the intersection crossing traffic. Instead, each vehicle communicates with the intersection, which we call the \emph{Intersection Control Agent} (ICA) from now on, to get permission to access the intersection. As shown in Figure \ref{fig:example}, an intersection consists of two regions. The bigger region in the figure, which we call the \emph{communication region}, is defined by the wireless vehicular communication range. The smaller region in the figure, which we call the \emph{intersection region}, is the area within an intersection that is shared by all roads connected to the intersection. We also assume that each vehicle is equipped with an RFID (Radio Frequency IDentification) chip and there are detectors installed at the entrance of the communication region so that ICA can detect each vehicle's identification number (VIN), the lane on which a vehicle is approaching an intersection, and the time when a vehicle enters the communication region. Since all vehicles are autonomous, we assume that each vehicle can obtain its position, speed, and the relative distance to an intersection precisely and also can avoid collisions with other vehicles autonomously when it is approaching an intersection. With regard to wireless vehicular connectivity, we only require information exchange between AVs and ICA. Thus, there is no V2V communication.   
Since the focus of this paper is to develop an algorithm for ICA for safer and higher throughput intersection crossing traffic, we simply assume that we have an ideal wireless vehicular communication performance such that all data packages are exchanged correctly and timely. However, it is important to note that, despite such an ideal communication assumption, our DTOT-based algorithm can still be applicable in practice with small modifications of the algorithm to take into account the communication unreliability thanks to the above mentioned information collection mechanism through detectors.

\subsection{Interaction between ICA and AV}
In the autonomous and connected intersection traffic considered in this paper, an AV and ICA start to interact with each other by exchanging some messages through vehicular wireless communication when the AV enters into the communication range of ICA. As shown in Figure \ref{fig:stateMachine}, an AV and ICA exchange some specific types of messages for their intersection crossing coordination. The interaction is initiated from an AV by sending a message, called a REQUEST, to reserve a sequence of space and time to cross the intersection. Each REQUEST message contains information that is necessary for a vehicle's space-time reservation for its intersection crossing such as (i) the VIN, (ii) the Vehicle Size (VS) that is simply a vehicle's length and width, and (iii) a vehicle's discrete time state trajectory, which we call the Timed State Sequence (TSS), starting from the entrance of an intersection region to the moment when the vehicle crosses the intersection region completely. Note that it is implicitly assumed that each discrete time state of a vehicle in TSS is also timed. This means that if a vehicle state $\mathbf{x}_t$ is given, then we can say that a vehicle possesses the state $\mathbf{x}$ at time $t$. For simplicity of our discussion, we simply assume that the state $\mathbf{x}$ of a vehicle consists of the $(x, y)$ coordinate of the vehicle's location and the orientation $\theta$. We also assume that, while it is possible that each vehicle can have different sampling period to generate its TSS, all vehicles use the same sampling period which is small enough to generate a close approximation of the vehicle's actual continuous motion within an intersection.

\begin{figure}[!htb]
	\centering
	\includegraphics[width=3.8in]{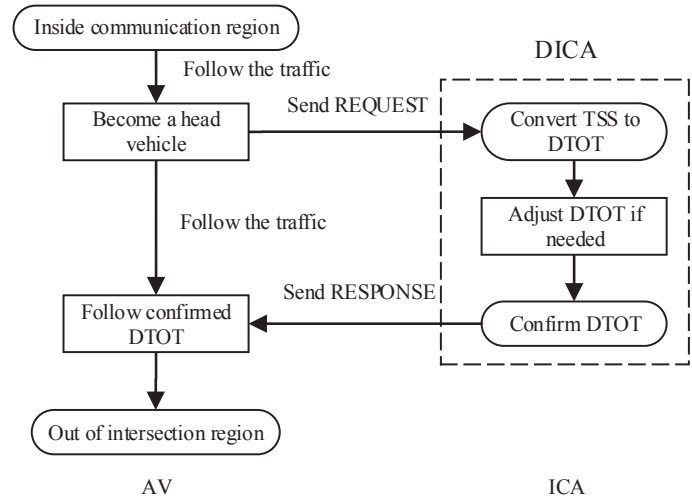}
	\caption{Interaction between an AV and the ICA.}
	\label{fig:stateMachine}
\end{figure}

As shown in Figure \ref{fig:stateMachine}, an AV will send REQUEST message to ICA only when it becomes a \emph{head vehicle}. A vehicle is considered a head vehicle on its lane if either there are no vehicles in front of it or the vehicle which is immediately in front of it has begun to enter the intersection region. Note that ICA also knows whether a vehicle is a head vehicle or not according to the list of vehicles for each lane. Thus, a REQUEST message not from a head vehicle will be neglected by ICA. The list can be constructed in ICA since, as explained earlier, ICA knows each vehicle's VIN, the lane on which the vehicle is approaching, and the time when a vehicle passes a detector installed at the boundary of the communication region of an intersection.
To respond to a REQUEST message from a head vehicle, ICA first converts the TSS to the corresponding DTOT using the VS information which is also contained in the received REQUEST message. Based on the length and width of a vehicle consisted in the VS, a DTOT is simply a sequence of timed rectangular spaces that a vehicle needs to occupy within an intersection region to cross the intersection.
Now, ICA uses DTOT to determine whether the requested DTOT can be accepted or not. If ICA finds any potential risk of collision with the request, then it adjusts the requested DTOT slightly in order to eliminate collisions. And then ICA confirms the adjusted collision-free DTOT and sends it back to the vehicle using a RESPONSE message so that the vehicle can follow the confirmed DTOT to cross the intersection. Note that when ICA sends a RESPONSE message to a vehicle, it actually sends the confirmed TSS not the confirmed DTOT. More detailed explanation on how to process the requested TSS to generate a confirmed DTOT is presented in the following section. In the sequel, we say that a vehicle is a \emph{confirmed vehicle} if it has received a confirmed DTOT from ICA. And we assume that every vehicle is able to follow the confirmed DTOT precisely.

\begin{algorithm}[!thb]   
	\caption{DICA (\textbf{D}TOT-based \textbf{I}ntersection traffic \textbf{C}oordination \textbf{A}lgorithm)}
	\begin{algorithmic}[1]
		\STATE Let $\mathcal{S}$ be the set of confirmed vehicles and $n = \vert \mathcal{S} \vert$.
		\STATE Let $v^i$ be the vehicle to be considered for confirmation.
		\STATE Convert $TSS(v^i)$ to $DTOT(v^i)$ \label{convert}
		\STATE Call checkFV$(\mathcal{S}, DTOT(v^i)) \rightarrow DTOT(v^i)$ \label{FV}
		\STATE Call getCV$(\mathcal{S}, DTOT(v^i)) \rightarrow \mathcal{C}$ \label{getCV}
		\WHILE {$\mathcal{C} \neq \emptyset$} \label{while}
		\STATE Pop the first vehicle in $\mathcal{C} \rightarrow v^j$
		\STATE Call updateDTOT$(DTOT(v^i), DTOT(v^j)) \rightarrow DTOT(v^i)$ \label{update}
		\STATE Call getCV$(\mathcal{S}, DTOT(v^i)) \rightarrow \mathcal{C}$
		\ENDWHILE \label{endwhile}
		
		\STATE Store $DTOT(v^i)$ for vehicle $v^i$ \label{store}
		\STATE Convert $DTOT(v^i)$ to $TSS(v^i)$
		\STATE Send $TSS(v^i)$ to vehicle $v^i$ \label{send}
		
	\end{algorithmic}
	\label{code:main}
\end{algorithm}

\subsection{DTOT-based Intersection Traffic Coordination}
ICA processes a REQUEST message from a head vehicle according to the procedures shown in Algorithm \ref{code:main} which we call the DTOT-based Intersection traffic Coordination Algorithm (DICA). As shown in the algorithm, DICA uses a few sets and notations. We use TSS($v$) to denote a TSS and DTOT($v$) to denote a DTOT for a vehicle $v$ respectively.  We also use $\mathcal{S}$ to denote the set of vehicles which have already been confirmed at the time when a REQUEST message is being processed. We say that two vehicles are  \emph{space-time conflicting} if their trajectories are conflicting not only in space but also in time. More precisely, two vehicles are considered to be in space-time conflict in our algorithm when their DTOTs have at least one pair of occupancies that are conflict in both space and time. We use another set $\mathcal{C}$ in Algorithm \ref{code:main} to represent the subset of $\mathcal{S}$ which contains the set of vehicles whose confirmed DTOTs have space-time conflict with the DTOT of the vehicle that is currently being processed for confirmation. Vehicles in $\mathcal{C}$ are ordered in ascending order of a certain attribute of their confirmed DTOTs. To explain this attribute more clearly, let us consider a situation when DICA processes a vehicle $v^i$'s DTOT and there are two vehicles $v^j$ and $v^k$ in the set $\mathcal{C}$. Now let us suppose that DTOT($v^j$) starts to space-time conflict with DTOT($v^i$) from its $n$-th occupancy and DTOT($v^k$) starts to space-time conflict with DTOT($v^i$) from its $m$-th occupancy. If we use $O^p_{q}$ to denote the $q$-th occupancy within DTOT($v^p$) and $\tau(O^p_{q})$ be the time when the vehicle $v^p$ occupies $O^p_q$, then we say that, in this particular situation, $\tau(O^j_n)$ is the first time at which $v^j$ starts to collide with $v^i$. Similarly, $\tau(O^k_m)$ is the time at which $v^k$ starts to collide with $v^i$. In the sequel, this specific time instant for each vehicle in $\mathcal{C}$ is represented by the variable `\emph{firstTimeAtCollision}'.  
In this particular situation, $\tau(O^j_n)$ and $\tau(O^k_m)$ are denoted by $v^j.firstTimeAtCollision$ and $v^k.firstTimeAtCollision$, respectively. Vehicles in the set $\mathcal{C}$ are ordered according to this variable. Specifically, if $v^j.firstTimeAtCollision$ is earlier than $v^k.firstTimeAtCollision$, then $v^j$ gets higher priority than $v^k$ and vice versa. 
To see more clearly how the `\emph{firstTimeAtCollision}' is determined, we can consider an illustrative example shown in Figure \ref{fig:example}. In the figure, DTOT($v^i$) and DTOT($v^j$) have space conflicts in \{$O^i_2, O^i_3$\} and \{$O^j_5, O^j_6,$\}. If we assume that these occupancies are also conflicting in time, then $v^j.firstTimeAtCollision$ with respect to the vehicle $v^i$ is $\tau(O^j_5)$. 

As shown in Algorithm \ref{code:main}, when ICA receives a REQUEST message from a head vehicle $v^i$, it first converts the TSS($v^i$) into the corresponding DTOT($v^i$) using the vehicle's VS. Then ICA calls the function \verb+checkFV()+ to determine if there exist \emph{front vehicles} (See Section \ref{subsubsec:front} for more details about front vehicles.) that affect the vehicle $v^i$'s motion and also to adjust $v^i$'s DTOT if needed. Then the function \verb+getCV()+ is called to determine the set $\mathcal{C}$ which is the set of vehicles whose DTOTs are space-time conflicting with DTOT($v^i$). The \verb+updateDTOT()+ function adjusts DTOT($v^i$) appropriately so that DTOT($v^i$) avoids space-time conflict with other vehicle's DTOT. These two functions are iteratively called within the \verb+while+ loop until the set $\mathcal{C}$ becomes empty, which indicates that no vehicles in the set $\mathcal{C}$ will collide with the vehicle $v^i$. After DTOT($v^i$) is appropriately adjusted and confirmed that there is no space-time conflict with all other confirmed vehicles, then the confirmed DTOT($v^i$) is converted into TSS($v^i$). Finally, ICA sends the confirmed TSS($v^i$) back to the vehicle $v^i$ so that the vehicle can cross the intersection safely by following the confirmed DTOT.
In the following sections, we provide more detailed explanation on several functions called within DICA.

\subsubsection{Collision Avoidance with Front Vehicles} \label{subsubsec:front}

As shown in Figure \ref{fig:front}, there are two types of front vehicles when a vehicle $v^i$ is approaching and crossing an intersection. 
In DICA, a vehicle is considered as a front vehicle of $v^i$ if the vehicle comes from another lane but has the same exit lane as vehicle $v^i$ or the vehicle is immediately in front of $v^i$ and has the exact same intersection crossing route as that of $v^i$. For a vehicle $v^i$, if there is another confirmed vehicle whose exit lane is the same as that of vehicle $v^i$ and will exit the intersection earlier, then they may collide immediately after crossing the intersection if the speed of vehicle $v^i$ is higher than that of the other confirmed vehicle. To address this problem, AIM \cite{dresner2008multiagent} adopted a simple strategy which gives one second separation time between these two vehicles. However, it is important to note that the separation time should depend on the speeds of the two vehicles. Hence, instead of using a fixed separation time approach, we use an approach that restricts the maximum speed of a following vehicle by the speed of the front vehicle. In the example situation (a) shown in Figure \ref{fig:front}, the vehicle $v^i$'s maximum allowed speed within an intersection is restricted by the front vehicle's exit speed. If there is another confirmed vehicle that has the same intersection crossing route as vehicle $v^i$, we adjust $v^i$'s speed to leave adequate distance between them. In Algorithm \ref{code:main}, the function \verb+checkFV()+ looks for the existence of above mentioned front vehicles from all confirmed vehicles and delay the new head vehicle to avoid potential collisions if needed.

\begin{figure}[!t]
	\centering
	\includegraphics[width=2.8in]{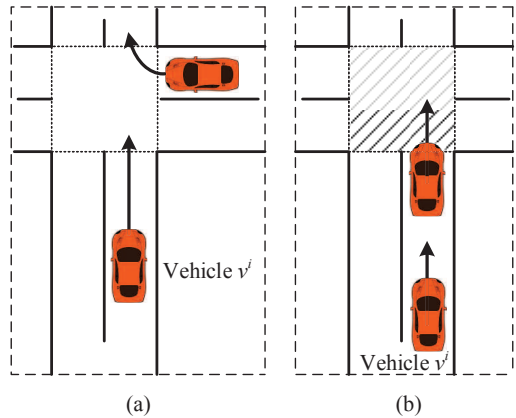}
	\caption[Example situations of front vehicles]{Example situations of front vehicles: (a) vehicles with different routes but same exit lane, and (b) vehicles with same intersection crossing routes.}
	\label{fig:front}
\end{figure}

\subsubsection{Vehicles for Collision Avoidance} \label{subsubsec:oti}
The function \verb+getCV()+ returns the set $\mathcal{C}$ that contains vehicles which will cause potential collisions inside the intersection with vehicle $v^i$. To better understand the operation of function \verb+getCV()+, it is necessary to introduce the way we check the space-time conflict between two occupancies from DTOTs of two vehicles. 
For every individual occupancy in a DTOT of a vehicle, we define the entrance time ($\tau_{lb}$) and the exit time ($\tau_{ub}$) of the occupancy as the times when the vehicle first contacts and is totally out of the occupancy. These two times can be estimated by taking the times of the previous and next occupancies which are the closest to the occupancy while having no overlapping area. As an example, for the occupancy $O^j_4$ of the vehicle $v^j$ in Figure \ref{fig:example}, the entrance time $\tau_{lb}(O^j_4)$ and the exit time $\tau_{ub}(O^j_4)$ of that occupancy can be determined by $\tau(O^j_2)$ and $\tau(O^j_6)$, respectively. Note that a DTOT for a vehicle consists of many more numbers of occupancies in practice. Hence, the entrance times and exit times determined in this way can be very close to the actual entrance and exit times of the occupancy. For the first several occupancies in a DTOT, there may not be a previous occupancy that has no overlapping area with themselves. For these occupancies, we simply take the first occupancy's time in the DTOT as these occupancies' entrance time. As an example shown in Figure \ref{fig:example}, we use $\tau(O^j_1)$ as the entrance time $\tau_{lb}(O^j_2)$ for the occupancy $O^j_2$. Similarly, we take the last occupancy's time as the exit time $\tau_{ub}$ for the last several occupancies in a DTOT.

As shown in Algorithm \ref{code:GetCV}, the function \verb+getCV()+ determines the set $\mathcal{C}$ by checking space-time conflict for every pair of occupancies $(O^i_n, O^j_m)$ for all $n, m$, and $j$ in the set $\mathcal{S}$. Since an occupancy in a DTOT is represented as a rectangle, it is relatively straightforward to do space conflict checking. For this, Algorithm \ref{code:GetCV} simply checks if two rectangles have non-empty intersection or not. 
If a pair of occupancies $(O^i_n, O^j_m)$ are space-conflicting, then the function continues to investigate these occupancies to determine if they are in time-conflict as well. The above explained entrance and exit times of an occupancy are used for this purpose. For a given occupancy $O$, the function \verb+getOTI()+ calculates these entrance $\tau_{lb}(O)$ and exit $\tau_{ub}(O)$ times for that occupancy and returns a corresponding time interval $I(O) := [\tau_{lb}(O), \tau_{ub}(O)]$ which we call the \emph{occupancy time interval} in the sequel. 
Then the two occupancy time intervals for the pair of space-conflicting occupancies are compared to determine if these occupancies are also occupied around the same time. 
If a pair of occupancies $(O^i_n, O^j_m)$ are conflicting in both space and time, then the vehicle $v^j$ is included in the set $\mathcal{C}$ and the corresponding \emph{firstTimeAtCollision} is determined so that the vehicle $v^j$ is appropriately ordered within the set $\mathcal{C}$.

\begin{algorithm}[!htb]    
	\caption{getCV$(\mathcal{S}, DTOT(v^i))$}
	\begin{algorithmic}[1]
		
		\STATE $\mathcal{C} = \emptyset$
		\FOR {$v^j$ in $\mathcal{S}$}
		\FOR {$O^j_{k_j}$ in $DTOT(v^j)$} \label{Ov}
		\IF {$v^j$ not in $\mathcal{C}$} \label{CVi}
		\FOR {$O^i_{k_i}$ in $DTOT(v^i)$}
		\IF {$O^j_{k_j} \cap O^i_{k_i} \neq \emptyset$} \label{if}
		\STATE Call getOTI($O^j_{k_j})$ $\rightarrow I(O^j_{k_j}) := [\tau_{lb}({O}^j_{k_j}), \tau_{ub}({O}^j_{k_j})]$
		\STATE Call getOTI($O^i_{k_i}$) $\rightarrow I(O^i_{k_i}) := [\tau_{lb}({O}^i_{k_i}), \tau_{ub}({O}^i_{k_i})]$
		\IF {$I({O}^j_{k_j}) \cap I({O}^i_{k_i}) \neq \emptyset$}
		\STATE Assign $\tau_{lb}({O}^j_{k_j}) \rightarrow v^j.firstTimeAtCollision$
		\STATE Push $v^j$ into $\mathcal{C}$
		\STATE Sort $\mathcal{C}$ in ascending order of \emph{firstTimeAtCollision}
		\ENDIF 
		\ENDIF \label{endif}
		\ENDFOR
		\ENDIF
		\ENDFOR
		\ENDFOR
		
	\end{algorithmic}
	\label{code:GetCV}
\end{algorithm}

\subsubsection{DTOT Update} \label{subsubsec:update}
The first vehicle $v$ in the set $\mathcal{C}$ is the earliest vehicle that is space-time conflicting with vehicle $v^i$. Then, in line \ref{update} of Algorithm \ref{code:main}, the function \verb+updateDTOT()+ modifies vehicle $v^i$'s DTOT to avoid collision with vehicle $v$ based on space-time conflicting occupancies between vehicles $v^i$ and $v$. However, it is still uncertain whether $\mathcal{C}$ will be empty or not after this update of avoiding collision with vehicle $v$. 
In fact, it is still possible that the modified DTOT of vehicle $v^i$ will be in space-time conflict with DTOTs of other confirmed vehicles.
Hence, to ensure that vehicle $v^i$ avoids collision with all other confirmed vehicles, it is necessary to construct $\mathcal{C}$ based on the updated vehicle $v^i$'s DTOT and update the DTOT again to avoid collision with the first vehicle in the set.
This process is repeated in the \verb+while+ loop in Algorithm \ref{code:main} until the set $\mathcal{C}$ becomes empty which means that vehicle $v^i$ is not conflicting with any confirmed vehicles.
Our current strategy for updating a vehicle's DTOT is to delay the vehicle until other confirmed vehicles cross an intersection safely. While it is an interesting future research problem to develop more sophisticated approaches to improve the overall performance, the current simple delay strategy is still very effective to ensure collision free intersection traffic. Note that, since the times of occupancies in a vehicle's DTOT are always delayed whenever the vehicle's DTOT is updated, it is guaranteed that the vehicle can always meet the updated DTOT by simply decelerating to experience a longer time before entering the intersection. The worst case is that a vehicle may need to stop and wait for some time before an intersection to meet the given confirmed TSS from ICA.

\section{Analysis} \label{sec:anal}

\subsection{Liveness}
A \emph{deadlock} is a situation where two or more processes are unable to proceed and each process is waiting for another one to finish because they are competing for shared resources. In an intersection crossing traffic, a deadlock could happen when several vehicles are trying to cross the intersection at the same time. For example, if the coordination between vehicles who want to cross an intersection is not done appropriately, then a deadlock may occur between two vehicles on a same lane. As discussed in \cite{dresner2008multiagent}, it is possible that even when the vehicle in front cannot get confirmed due to the conflict of its intersection crossing route with those of other vehicles which are already confirmed to enter and cross an intersection, the vehicle in the back may get confirmed because its intersection crossing route is not conflicting with other confirmed vehicles' crossing routes. And the vehicle successfully reserves the space for its intersection crossing route within an intersection. In this situation, the front vehicle cannot get confirmed since some part of the intersection crossing route of it conflicts with that of the behind vehicle which is already confirmed and also the behind vehicle cannot proceed to cross the intersection due to the unconfirmed front vehicle.
A deadlock situation may also occur when several vehicles from different directions want to cross an intersection at the same time. This type of deadlock situation is discussed in detail in \cite{azimi2013reliable} for the case of four vehicles in which none of the vehicles can progress inside the intersection because each of the vehicles' next occupancies are already occupied by other vehicles. Now we show that DICA shown in Algorithm \ref{code:main} are free from these deadlock situations.

\begin{prop} DICA is deadlock free.
\end{prop}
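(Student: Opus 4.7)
The plan is to argue deadlock-freedom by contradiction, examining the two prototypical deadlock scenarios described immediately before the proposition: (i) a ``same-lane'' deadlock in which a rear vehicle is confirmed ahead of the front vehicle on the same lane and thereby blocks it, and (ii) a ``cross-lane'' deadlock in which several head vehicles from different directions mutually block one another inside the intersection region. I would first isolate the precise invariant DICA enforces during confirmation: once a vehicle is in $\mathcal{S}$, its DTOT is never modified by subsequent invocations of the algorithm; only the DTOT of the vehicle currently being processed is delayed by \texttt{updateDTOT()}. This is the single structural property that drives the whole argument.

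For scenario (i), the key observation to exploit is the head-vehicle rule stated in Section~\ref{sec:dtot}: ICA maintains per-lane vehicle lists and discards any REQUEST that does not come from a head vehicle. Consequently, a rear vehicle on a given lane cannot even be considered by Algorithm~\ref{code:main} until the vehicle immediately in front of it either has been confirmed and has begun to enter the intersection region, or no longer exists on that lane. Therefore the front-before-rear ordering on each lane is preserved by construction, and the same-lane deadlock described in the literature cannot arise under DICA.

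For scenario (ii), I would argue that the update strategy monotonically delays the DTOT of the vehicle under processing. Concretely, let $v^i$ be the vehicle currently being handled and let $v^j \in \mathcal{C}$ be the first conflicting vehicle selected in the \texttt{while} loop; because \texttt{updateDTOT()} adjusts $DTOT(v^i)$ only by delaying it past $DTOT(v^j)$ while leaving $DTOT(v^j)$ untouched, every iteration of the loop produces a strictly later $DTOT(v^i)$. Since $|\mathcal{S}| = n$ is finite and each confirmed DTOT occupies a bounded time window, after finitely many delays $DTOT(v^i)$ is pushed past the last exit time among all confirmed vehicles, at which point \texttt{getCV()} must return $\mathcal{C}=\emptyset$ and the loop exits. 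Thus $v^i$ is always eventually confirmed, so no cross-lane circular wait can form: confirmations proceed in a strict, acyclic ``first-confirmed keeps priority'' order, and the resources (space-time occupancies) of already-confirmed vehicles are never reclaimed, eliminating the possibility of a hold-and-wait cycle that is necessary for a deadlock.

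The delicate point, and the one I would dwell on carefully, is justifying that ``delay long enough'' always eventually eliminates the conflict set $\mathcal{C}$; this requires noting that each confirmed DTOT has a finite \emph{exit time} (the vehicle eventually clears the intersection region) and that delaying $v^i$ beyond $\max_{v^j\in\mathcal{S}} \tau_{ub}(O^j_{\text{last}})$ removes all space-time conflicts, since a pure delay preserves the geometric shape of the occupancies and only shifts their time stamps. With this termination guarantee in hand, the proof reduces to observing that the two necessary ingredients for a deadlock---circular wait and hold-and-wait on unreleasable resources---are both absent: only the new vehicle's DTOT is ever altered, and the head-vehicle rule rules out the reverse-ordering case on the same lane.
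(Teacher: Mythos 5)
Your proof is correct, and it rests on exactly the same two structural facts as the paper's: the head-vehicle rule (which rules out the same-lane reverse-ordering deadlock) and the immutability of confirmed DTOTs combined with the delay-only update strategy (which rules out circular wait among vehicles from different directions). The difference is one of organization: the paper argues by induction on the confirmation step, showing that the set $\mathcal{S}_k$ of confirmed vehicles is deadlock free for every $k$, with the inductive step observing that the newly confirmed vehicle $v^i$ is delayed only by vehicles already in $\mathcal{S}_k$, all of which eventually clear the intersection. You instead verify directly that the classical necessary conditions for deadlock (hold-and-wait on unreleasable resources and circular wait) cannot arise. What your route buys is an explicit termination argument for the \texttt{while} loop in Algorithm~\ref{code:main} --- namely that a pure time-shift of $DTOT(v^i)$ beyond $\max_{v^j\in\mathcal{S}} \tau_{ub}(O^j_{last})$ eliminates every space-time conflict, so $\mathcal{C}$ must become empty after finitely many iterations even though a single delay may introduce new conflicts. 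The paper asserts this eventual confirmation somewhat informally (``the vehicle $v^i$'s DTOT is also updated so that the vehicle $v^i$ will eventually enter and cross''), so your version fills in a step the paper leaves implicit; the paper's induction, in turn, makes cleaner the claim that deadlock-freedom is preserved as the confirmed set grows over time rather than only for a single confirmation episode.
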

\begin{proof}
Let $\mathcal{S}_k$ denote the set of confirmed vehicles at the $k$-th time step of DICA. Then, we show that the set $\mathcal{S}_k$ is deadlock free for all $k = 0, 1, 2, \cdots$ by induction. First, at time step $k = 0$, it is easy to see that there is no deadlock in $\mathcal{S}_0$ since no vehicle is confirmed yet, i.e., $\vert \mathcal{S}_0 \vert = 0$ where $\vert \cdot \vert$ denotes the cardinality of a set. Then, at time step $k > 0$, let us suppose that $\mathcal{S}_k$ is deadlock free and a new head vehicle $v^i$ is under consideration for confirmation. Note that, as discussed in Section \ref{sec:dtot}, a vehicle is considered by DICA for confirmation only if it is the head vehicle on its lane. Hence, it is trivial to see that there won't be a deadlock situation between the vehicle $v^i$ and other vehicle $v^{i'}$ which is behind $v^i$ since $v^{i'} \not\in \mathcal{S}_k$.
Next, let us note that once a vehicle $v^j$ is in $\mathcal{S}_k$, then the vehicle's DTOT will not be changed while and after a new vehicle $v^i$ is processed to be confirmed by DICA. 
Hence, it is easy to see that any vehicle which is in $\mathcal{S}_k$ at time step $k$ remains deadlock free at the next time step $(k+1)$. Now suppose that the new vehicle $v^i$ has been confirmed by DICA at time step $k$ and included in the set of confirmed vehicle at time step $(k+1)$, i.e., $v^i \in \mathcal{S}_{k+1}= \mathcal{S}_k \cup \{ v^i \}$. Since all vehicles in $\mathcal{S}_k \subset \mathcal{S}_{k+1}$ are deadlock free, if the new vehicle $v^i$ is deadlock free, then we know that $\mathcal{S}_{k+1}$ is deadlock free and this proves the deadlock free property of DICA. In fact, it is straightforward to see that $v^i$ is also deadlock free after its DTOT is updated and confirmed by DICA. First, note that modification of the vehicle $v^i$'s DTOT is not affected by any vehicle $v \not\in \mathcal{S}_k$. Instead, it is affected only by vehicles which are already in the $\mathcal{S}_k$. Since all vehicles in $\mathcal{S}_k$ are deadlock free and eventually proceed to cross and exit the intersection, the vehicle $v^i$'s DTOT is also updated so that the vehicle $v^i$ will eventually enter and cross the intersection while all vehicles in $\mathcal{S}_k$ cross the intersection safely. Thus, the vehicle $v^i$ is also deadlock free at time step $(k+1)$ and this concludes the proof of this proposition.
\end{proof}

In an intersection crossing traffic, a \emph{starvation} situation may occur when vehicles from a certain direction are waiting for a very long time or even indefinitely to be allowed to enter and cross an intersection while vehicles from other directions are continuously allowed to cross the intersection. Now we show that a starvation situation will not occur in an intersection crossing traffic that is coordinated by DICA.

\begin{prop} DICA is starvation free.
\end{prop}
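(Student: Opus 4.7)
The plan is to adapt the induction used in Proposition~1 to show that every vehicle entering the communication region eventually receives a confirmed DTOT and crosses the intersection in finite time. The outer induction will be on the position of a vehicle in the arrival queue of its own lane, and the inner argument will be a finite-termination analysis of the \texttt{while} loop in Algorithm~\ref{code:main}. Since confirmed DTOTs are immutable once accepted (as already observed in the proof of Proposition~1) and ICA processes REQUEST messages one at a time, the set $\mathcal{S}_k$ of currently confirmed vehicles is frozen while a head vehicle $v^i$ is being processed. In particular, $|\mathcal{S}_k|$ is finite and there is a finite latest exit time $T_{\max}(\mathcal{S}_k)$ of the vehicles in $\mathcal{S}_k$.

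The heart of the proof is to show that, given a new head vehicle $v^i$, DICA confirms it within finitely many iterations of the \texttt{while} loop, producing a confirmed DTOT whose earliest occupancy time is finite. By the delay strategy in Section~\ref{subsubsec:update}, each iteration pops the vehicle $v^j \in \mathcal{C}$ with the earliest \emph{firstTimeAtCollision} and modifies $DTOT(v^i)$ by strictly advancing its occupancy times forward so that the conflicting occupancies no longer overlap in time with those of $v^j$. I would argue that, because $v^j$'s DTOT is fixed and $v^i$'s occupancy times only move forward, $v^j$ can never be reinserted into $\mathcal{C}$ in subsequent iterations. Together with the finiteness of $\mathcal{S}_k$, this bounds the total number of iterations by $|\mathcal{S}_k|$, so the loop terminates and $v^i$ is confirmed in finite time.

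With the inner finite-termination claim established, the outer induction closes cleanly: the first vehicle on any lane becomes a head vehicle immediately upon entering the communication region and is confirmed in finite time by the inner argument; assuming the first $k-1$ vehicles on a lane $\ell$ all cross in finite time, the $k$-th vehicle on $\ell$ becomes a head vehicle in finite time and is then similarly confirmed and crosses in finite time. Hence no vehicle waits indefinitely, which is precisely starvation-freedom.

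The hard part will be rigorously establishing that a single delay resolving a conflict with $v^j$ cannot force an arbitrarily long sequence of further delays caused by new conflicts with other vehicles $v^k \in \mathcal{S}_k$. The cleanest approach I see is to exhibit a strictly decreasing measure across iterations---specifically, the subset of $\mathcal{S}_k$ whose confirmed DTOT can still be in space--time conflict with the current $DTOT(v^i)$---which is finite, shrinks by at least one per iteration because of the ordering of $\mathcal{C}$ by \emph{firstTimeAtCollision} together with the monotone forward-only delay update, and hence reaches $\emptyset$ in at most $|\mathcal{S}_k|$ iterations.
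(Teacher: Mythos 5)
Your proposal proves something different from what the paper's proof targets, and the difference is exactly where the gap lies. Starvation here means a head vehicle on some lane being perpetually bypassed while ICA keeps confirming head vehicles arriving on \emph{other} lanes. Your two ingredients --- finite termination of the \texttt{while} loop for a single REQUEST, and per-lane FIFO progress --- do not rule this out: each individual confirmation can take finite time, and each lane's queue advances whenever its head is served, yet a particular head vehicle may never be \emph{selected} for service if ICA's discipline among the concurrently pending head vehicles from different lanes is unfair (e.g., always serving the most recently arrived head first). Your outer induction silently assumes that ``the first vehicle on any lane \ldots is confirmed in finite time by the inner argument,'' but the inner argument only bounds the processing time once processing begins; it says nothing about when processing begins. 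That is the missing idea. The paper closes this hole by making the service discipline explicit: it introduces the set $\mathcal{H}$ of current head vehicles ordered by their entrance times $\sigma(v)$ into the communication region, together with the set $\mathcal{H}^-$ of approaching non-head vehicles, observes that both are finite (bounded by the number and length of the lanes inside the communication region), and uses the fact that DICA confirms vehicles in the order of $\mathcal{H}$. Hence any given vehicle is preceded by at most $\vert \mathcal{H} \vert + \vert \mathcal{H}^- \vert$ confirmations before reaching the front, which is essentially the whole proof.

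Your inner claim --- that each iteration of the \texttt{while} loop permanently removes at least one vehicle from the conflict set, because the update only delays $v^i$ past a fixed, already-confirmed DTOT --- is reasonable and in fact matches what the paper asserts inside the proof of Proposition~\ref{prop:dica} when it bounds the loop at $n$ iterations; it is a useful supporting lemma that the paper's starvation proof leaves implicit. But it is a statement about the latency of one confirmation, not about fairness among competing requesters, so it cannot substitute for the ordering argument on $\mathcal{H}$. If you add the FIFO-by-arrival-time service order and the finiteness of the communication region as explicit premises, your argument can be repaired and essentially collapses into the paper's.
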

\begin{proof}
First, let us recall that, as discussed in Section \ref{sec:dtot}, DICA considers a vehicle for confirmation only when the vehicle becomes the head vehicle on its lane. Now let $\sigma(v)$ be the vehicle $v$'s entrance time to the communication region of an intersection, $\mathcal{H}$ be the set of head vehicles which is ordered by $\sigma(v)$ for all $v \in \mathcal{H}$, and $\mathcal{H}^-$ be the set of vehicles which are approaching to cross an intersection but not included in the set $\mathcal{H}$. Clearly, $\vert \mathcal{H} \vert$ is bounded by the number of all lanes from which vehicles are approaching an intersection to cross and $\vert \mathcal{H}^- \vert$ is also bounded by both the number of lanes and the length of lanes within the communication region of an intersection.
Note that DICA processes vehicles in $\mathcal{H}$ for confirmation according to the order of vehicles in $\mathcal{H}$. Once the first vehicle in $\mathcal{H}$ is processed and gets confirmed, then the vehicle is removed from $\mathcal{H}$. Note that if DICA is not starvation free, then there must exist at least one vehicle $v \in \mathcal{H}$ such that the vehicle $v$ will never (or at least take an unnecessarily very long time to) become the first element in the ordered set $\mathcal{H}$. Thus, to prove the starvation free property of DICA, it suffices to show that, for any vehicle $v \in \mathcal{H}$, the vehicle $v$ will be removed from $\mathcal{H}$ in finite time.
To show this, we can consider the last vehicle $v_{last}$ in the ordered set $\mathcal{H}$. If $\sigma(v_{last}) \le \sigma(v)$ for all $v \in \mathcal{H}^-$, then the vehicle $v_{last}$ will be cleared right after all other vehicles in $\mathcal{H}$ are confirmed and this is the earliest time for $v_{last}$ to be removed from $\mathcal{H}$. On the other hand, if $\sigma(v_{last}) > \sigma(v)$ for all $v \in \mathcal{H}^-$ as the worst situation for $v_{last}$, then the vehicle $v_{last}$ might need to wait until all ($\vert \mathcal{H} \vert + \vert \mathcal{H}^- \vert$) vehicles get confirmed to be considered for confirmation. Thus, it is clear that the vehicle $v_{last}$ will be cleared from $\mathcal{H}$ in finite time. 
\end{proof}

\subsection{Computational Complexity} 
In this section, we analyze the computational complexity of DICA shown in Algorithm \ref{code:main}. 
Recall that $\mathcal{S}$ is the set of vehicles within the communication region of an intersection that has been confirmed to cross.
Let us assume that there are $n$ vehicles in $\mathcal{S}$, i.e., $\vert \mathcal{S} \vert = n$. 
Then we have the following result on the computational complexity analysis of DICA.

\begin{prop} \label{prop:dica}
DICA has $\mathcal{O}(n^2 L_m^3)$ computational complexity where $L_m$ is the maximum length of intersection crossing routes in an intersection.
\end{prop}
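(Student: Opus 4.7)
The plan is to bound the running time of Algorithm~\ref{code:main} by identifying the dominant contribution, which will turn out to come from the repeated calls to \verb+getCV()+ inside the \verb+while+ loop. Everything outside the loop (the TSS--DTOT conversions, \verb+checkFV()+, the final storage and transmission) runs in at most linear time in $n$ and $L_m$, so these contribute lower-order terms and can be dispatched quickly.

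First I would analyze a single call to \verb+getCV()+ (Algorithm~\ref{code:GetCV}). The outer loop ranges over the $n$ vehicles in $\mathcal{S}$; for each such $v^j$ the algorithm sweeps its at most $L_m$ occupancies and, for each one, checks space-conflict against the at most $L_m$ occupancies of $v^i$'s DTOT. The rectangle intersection test at line~\ref{if} is $\mathcal{O}(1)$, but whenever two occupancies are space-conflicting the algorithm invokes \verb+getOTI()+ to compute $\tau_{lb}$ and $\tau_{ub}$. By the definition given in Section~\ref{subsubsec:oti}, obtaining these endpoints requires scanning backward (resp.\ forward) along the DTOT for the nearest occupancy with empty intersection with the current rectangle, which in the worst case costs $\mathcal{O}(L_m)$. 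Multiplying, one call to \verb+getCV()+ costs $\mathcal{O}(n \cdot L_m \cdot L_m \cdot L_m) = \mathcal{O}(n L_m^3)$. A single call to \verb+updateDTOT()+ simply shifts (delays) the remaining occupancies of $v^i$'s DTOT and therefore costs $\mathcal{O}(L_m)$, which is absorbed.

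Next I would bound the number of iterations of the \verb+while+ loop at lines \ref{while}--\ref{endwhile}. The key observation is that each iteration strictly resolves the conflict with the earliest space-time conflicting vehicle $v^j$ by delaying $\mathrm{DTOT}(v^i)$ past the relevant occupancies of $v^j$, and, once resolved, that particular $v^j$ cannot reappear as the first element of $\mathcal{C}$ on subsequent iterations (its \emph{firstTimeAtCollision} is strictly earlier than any new conflict the delayed DTOT could generate with it). Hence the loop iterates at most $|\mathcal{S}| = n$ times, so the total work done inside the loop is $n \cdot \mathcal{O}(n L_m^3) = \mathcal{O}(n^2 L_m^3)$. Combining with the dominated pre- and post-processing steps yields the stated bound $\mathcal{O}(n^2 L_m^3)$.

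The step I expect to be the main obstacle is the justification of the $\mathcal{O}(n)$ bound on the \verb+while+ loop, since DICA never explicitly removes a vehicle from $\mathcal{S}$; the argument must rest on the monotonicity of the delay-only update strategy described at the end of Section~\ref{subsubsec:update} together with the ordering of $\mathcal{C}$ by \emph{firstTimeAtCollision}. If a cleaner formal argument is needed I would phrase it as: the earliest conflict time $v^i.firstTimeAtCollision$ is strictly increasing across iterations, and since there are only $n$ vehicles in $\mathcal{S}$ each can contribute at most once as the blocking vehicle, bounding the iteration count by $n$.
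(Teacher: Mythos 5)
Your decomposition matches the paper's almost exactly: a single call to \verb+getCV()+ costs $\mathcal{O}(n N_m^3)$ because of the $n \cdot N_m^2$ pairwise occupancy checks each of which may trigger an $\mathcal{O}(N_m)$ scan inside \verb+getOTI()+, the \verb+updateDTOT()+ cost is absorbed, and the \verb+while+ loop runs at most $n$ times because each delay resolves the earliest-conflicting vehicle permanently under the delay-only update strategy. Your monotonicity argument for the iteration bound is in fact slightly more careful than the paper's, which simply asserts that each update reduces the conflicting set by at least one.

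The one step you elide is the passage from the number of occupancies to the route length. You write that each DTOT has ``at most $L_m$ occupancies,'' but $L_m$ is a physical distance while the occupancy count is $N_m = T_m/h$, the crossing time divided by the sampling period. The paper closes this gap with a short kinematic argument: when $L_m$ is long enough for a vehicle to reach its maximum speed $v_m$, one gets $\bar{N}_m = (2a_m L_m + v_m^2)/(2 a_m v_m h)$, which is linear in $L_m$; otherwise $\bar{N}_m = (\sqrt{2L_m/a_m})/h \propto \sqrt{L_m}$. Either way $N_m = \mathcal{O}(L_m)$ for fixed $h$, $v_m$, $a_m$, which is what licenses substituting $L_m$ for $N_m$ in $\mathcal{O}(n^2 N_m^3)$. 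You should add a sentence to this effect; with that, your proof is complete and follows the paper's route.
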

\begin{proof}
Let $v^i$ be the vehicle which is currently being processed by ICA for intersection crossing confirmation. 
Also let $N_m := \max_{k \in \mathcal{S}'} N^k$ where $\mathcal{S}' = \mathcal{S} \cup \{ v_i \}$ and $N^k$ is the number of occupancies in the vehicle $k$'s DTOT.
Then, in line 3 (Algorithm \ref{code:main}), it is easy to see that creating DTOT from the TSS and vehicle size information in the vehicle $v^i$'s REQUEST message involves only $\mathcal{O}(N_m)$ computational complexity.
In line 4 (Algorithm \ref{code:main}), as explained in Section \ref{sec:dtot}, the front vehicle checking function \verb+checkFV()+ does a simple comparison with every confirmed vehicle in $\mathcal{S}$ to see if there are any vehicles which might affect the vehicle $v^i$'s DTOT and modifies the DTOT if it is necessary to ensure enough separation time and distance between the vehicle $v^i$ and other vehicles in front. And this process requires computational complexities $\mathcal{O}(nN_m)$. 
Then, in line 5 (Algorithm \ref{code:main}), the function \verb+getCV()+ is called to identify the set of vehicles $\mathcal{C}$ in $\mathcal{S}$ whose DTOTs might be in space-time conflict with the vehicle $v^i$'s DTOT. (Note that, as shown in Algorithm \ref{code:GetCV}, $\mathcal{C}$ is an ordered set according to time of collision and it is clearly $\mathcal{C} \subseteq \mathcal{S}$.) Thus, to return the set $\mathcal{C}$ from the set $\mathcal{S}$, this function performs $n$ times of space-time conflict checking between the vehicle $v^i$ and vehicles in $\mathcal{S}$. If a nonempty set $\mathcal{C}$ is returned in line 5 (Algorithm \ref{code:main}), then, in lines 6 $\sim$ 10 (Algorithm \ref{code:main}), the vehicle $v^i$'s DTOT is iteratively updated until the set $\mathcal{C}$ becomes empty within the \verb+while+ loop. (As one can see in Algorithms \ref{code:main} and \ref{code:GetCV}, these steps are indeed the main part of the DICA algorithm and involve some computationally expensive operations. Hence, we describe the computational complexity of steps within the \verb+while+ loop separately in the next paragraph.)  
After the \verb+while+ loop, as the last steps in Algorithm \ref{code:main} in lines from \ref{store} to \ref{send}, the space-time conflict free DTOT for the vehicle $v^i$ is stored, converted into TSS, and then sent to $v^i$ so that the vehicle can cross the intersection according to the DTOT. Clearly, these steps are fairly simple in terms of computation and in fact require $\mathcal{O}(1)$ complexity.
Next, we analyze the computational complexity of steps within the \verb+while+ loop. 

\emph{Space-time conflict checking steps}: 
As described in Section \ref{sec:dtot}, space-time conflict checking in \verb+getCV()+ is done using DTOTs of vehicles. Specifically, the two nested \verb+if+ blocks from line 6 to line 14 in Algorithm \ref{code:GetCV} perform this operation. For space conflict checking, it is checked if there exists nonempty intersections between two occupancies: one from DTOT of the vehicle $v^i$ and another from DTOT of one of the vehicles in the set $\mathcal{S}$. This is done in the outer \verb+if+ block and requires $n \cdot N_m^2$ times of iteration in the worst case. If two vehicles have a space conflict, then Algorithm \ref{code:GetCV} proceeds to check for time conflict. To check time overlapping between two space conflicting occupancies, the function needs to calculate time intervals for these occupancies during which each vehicle occupies its occupancy. This can be done easily by comparing occupancy time between occupancies within the same DTOT. As an example, for a given occupancy $O^i_k$ which is the $k$-th occupancy within the vehicle $v^i$'s DTOT, the lower and upper bounds for the occupancy time can be determined by space overlapping checking between the occupancies $O^i_k$ and $O^i_{k'}$ for $k' = \{ 1, \cdots, N_m \} \setminus k$. Thus, the two function calls to \verb+getOTI()+ within the \verb+if+ block involve the computational complexity of $\mathcal{O}(N_m)$. Once the occupancy time intervals are determined, it is a straightforward calculation to check time overlapping as shown in line 9 of Algorithm \ref{code:GetCV} and it takes $\mathcal{O}(1)$ computational complexity. After identifying all space-time conflicting vehicles from the set $\mathcal{S}$ and storing them to the set $\mathcal{C}$, Algorithm \ref{code:GetCV} then sorts the set $\mathcal{C}$ according to the ascending order of occupancy times of space-time conflicting occupancies and returns the set. Note that $\vert \mathcal{C} \vert \le n$ and $n \ll N_m$ in general. Hence, this sorting operation can be done with $\mathcal{O}(n log_2 N_m)$ computational complexity. If we consider all these calculation steps in the \verb+getCV()+ function, then one can see that the overall computational complexity for space-time conflict checking steps in \verb+getCV()+ is $\mathcal{O}(n N_m^3)$

\emph{DTOT adjustment for collision avoidance}: 
Once the set $\mathcal{C}$ is returned from the function \verb+getCV()+, the DICA algorithm updates the vehicle $v^i$'s DTOT to avoid space-time conflict with DTOTs of the vehicles in the set $\mathcal{C}$. In line 7 (Algorithm \ref{code:main}), it is shown that the first vehicle $v^j$ in the set $\mathcal{C}$ is considered for updating the vehicle $v^i$'s DTOT. As described in Section \ref{sec:dtot}, our update strategy to avoid space-time conflict is to make the vehicle $v^i$ enter the intersection area a little bit late so as to give enough time for vehicle $v^j$ to cross the intersection safely. For this, the DICA algorithm first needs to compute the delay time needed to avoid the space-time conflict with the vehicle $v^j$. Since the occupancy time interval $I(O^j_k)$ for the vehicle $v^j$'s earliest space-time conflicting occupancy has already been determined from the function \verb+getCV()+, it is easy to calculate this delay time in this update process. Once the delay time is determined, then the remaining step is simply to change the occupancy times of all the occupancies in the vehicle $v^i$'s DTOT to be delayed and this results in $\mathcal{O}(N_m)$ computational complexity.

As described above, the number of vehicles in the set $\mathcal{S}$ is $n$ when the function \verb+getCV()+ is called for the first time in line 5 (Algorithm \ref{code:main}). Then, within the \verb+while+ loop, the function \verb+updateDTOT()+ adjusts the vehicle $v^i$'s DTOT to avoid collision with the first vehicle in the set $\mathcal{C}$ and this step reduces the number of vehicles in the set $\mathcal{C}$ that can potentially collide with the vehicle $v^i$ at least by one. Thus, in the worst case, the number of vehicles in the set $\mathcal{C}$ returned by the second call of $\verb+getCV()+$ within the \verb+while+ loop is $(n-1)$. If we assume the worst case for every following iterations within the \verb+while+ loop until the set $\mathcal{C}$ becomes empty, then it is easy to see that the functions \verb+getCV()+ and \verb+updateDTOT()+ are called $n$ times within the \verb+while+ loop. This implies that, since the computational complexity of the function \verb+updateDTOT()+ is significantly lower than that of the function \verb+getCV()+, the overall computational complexity of the \verb+while+ loop can be considered as $\mathcal{O}(n^2 N_m^3)$. 

Note that the maximum number of occupancies $N_m$ depends on both the time that it takes for a vehicle to cross the intersection and the discrete time step used to construct the DTOT by ICA. If we let $h$ be the discrete time step used by ICA and $T_m$ be the time it takes for a vehicle to completely cross an intersection when the vehicle starts from rest and accelerates to cross the intersection as quickly as possible, then we have $\bar{N}_m := T_m/h$ as an upper bound for $N_m$. Note that $T_m$ depends on the length of an intersection crossing route that a vehicle takes to cross an intersection. If we let $L_m$ be the maximum length out of all intersection crossing routes for an intersection, then $\bar{N}_m$ can be expressed in terms of $L_m$ instead of $T_m$. Specifically, if $L_m$ is long enough so that a vehicle can reach its maximum allowed speed $v_m$ within an intersection before it completely crosses the intersection, then it can be shown that $\bar{N}_m = {(2a_m L_m + v_m^2) / (2a_m v_m h)}$ where $a_m$ is the maximum acceleration rate of a vehicle. On the other hand, if $L_m$ is not long enough for a vehicle to reach $v_m$ while crossing an intersection, then it is also relatively straightforward to show that $\bar{N}_m = (\sqrt{2 L_m / a_m})/h$. (These two different cases are illustrated in Figure \ref{fig:complexity}.)  If we fix values for $h, v_m$, and $a_m$, then one can see that $\bar{N}_m$ for the former case is proportional to $L_m$ while, for the latter case, $\bar{N}_m$ is proportional to the square root of $L_m$. Hence, if we substitute $L_m$ for $N_m$ in the computational complexity $\mathcal{O}(n^2 N_m^3)$ that we derived above, then we finally have $\mathcal{O}(n^2 L_m^3)$ as the overall computational complexity of DICA.
\begin{figure}
	\centering
	\includegraphics[width=2.5in]{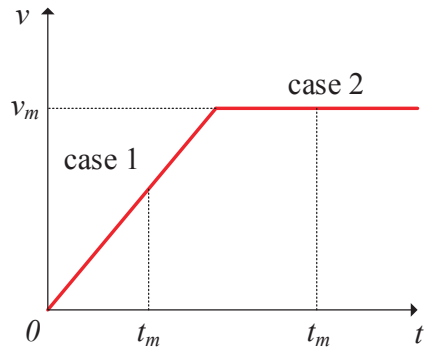}
	\caption[Two different cases for shortest intersection crossing time ($T_m$) calculation]{Two different cases for shortest intersection crossing time ($T_m$) calculation. (Case 1 is the situation when $L_m$ is too short to reach $v_m$ and case 2 is the situation when $L_m$ is long enough to reach $v_m$ while a vehicle is crossing an intersection.)}
	\label{fig:complexity}
\end{figure}
\end{proof}

\section{Algorithm Improvements} \label{sec:improv}
According to the computational complexity analysis result described in the previous section, it is true that the original DICA algorithm that is shown in Algorithms \ref{code:main} and \ref{code:GetCV} is somewhat conservative in terms of computational cost to be used in practice. 
In this section, we present several approaches that can be used to improve the overall computational complexity of the algorithm.

\subsection{Reduced Number of Vehicles for Space-Time Conflict Check} \label{sec:imp1}
As shown in Algorithm \ref{code:GetCV}, all confirmed vehicles in the set $\mathcal{S}$ are examined to obtain the set of space-time conflicting vehicles $\mathcal{C}$ for a new unconfirmed head vehicle $v^i$. However, we see that this computation process can be improved by excluding vehicles that cannot be in space-time conflict with the vehicle $v^i$ under any circumstances from the set $\mathcal{S}$. For example, a confirmed vehicle $v^j \in \mathcal{S}$ who has an intersection crossing time interval that is not overlapping with the vehicle $v^i$'s intersection crossing time interval can be excluded. Note that the intersection crossing time interval of a confirmed vehicle can be easily determined by the lower bound of the occupancy time $\tau_{lb}(O_{first})$ of the vehicle's first occupancy $O_{first}$ and the upper bound of the occupancy time $\tau_{ub}(O_{last})$ of the vehicle's last occupancy $O_{last}$ in the vehicle's confirmed DTOT. In addition to these vehicles, vehicles in the set $\mathcal{S}$ whose intersection crossing routes are compatible with that of vehicle $v^i$ can also be excluded. 
Hence, if we let $\mathcal{S}^*$ be the subset of all confirmed vehicles in set $\mathcal{S}$ that can be obtained after excluding all above mentioned vehicles in determining the set $\mathcal{C}$, then the resulting computational complexity for the space-time conflict checking in function \verb+getCV()+ becomes $\mathcal{O}(\alpha_1 n N_m^3)$ where $\alpha_1 := \tilde{n}/n$, $\tilde{n} = \vert \mathcal{S}^*\vert$, $n = \vert \mathcal{S} \vert$, and $N_m$ is the maximum number of occupancies of all vehicles that are in the set $\mathcal{S}$ and also the vehicle that is currently under consideration for confirmation. (See the proof of Proposition \ref{prop:dica} for the precise definition of $N_m$.)

\subsection{Efficient Space Conflict Check} \label{sec:imp2}
Note that, for any two vehicles coming from different directions, they can collide with each other only within some parts of their intersection crossing routes. Thus, not all occupancies of a vehicle's DTOT needs to be checked for space conflict with another vehicle's DTOT. For example, the two vehicles $v^i$ and $v^j$ in Figure \ref{fig:example} have very short ranges of intersection crossing routes that are space conflicting with each other. Thus, the occupancies to be checked can be reduced to {\{$O^i_2, O^i_3$\} and \{$O^j_5, O^j_6$\} from their entire DTOTs. 
Since the number of occupancies in a DTOT is very large in general, this can improve computational speed considerably. 
Note that, since the intersection crossing routes are fixed for a specific intersection, we can predetermine these space conflicting short ranges offline only one time for all pairs of incompatible intersection crossing routes. Hence, this extra preparation process does not incur an additional computational cost during the online operation of DICA.
If we use DTOT$^*$ to denote the subset of the original DTOT for a vehicle that can be obtained from this approach, then the computational complexity of the function \verb+getCV()+ in Algorithm \ref{code:GetCV} can be expressed as $\mathcal{O}(\alpha_2^3 n N_m^3)$ where $\alpha_2 := \tilde{N}_m / N_m$ and $\tilde{N}_m$ is the maximum number of occupancies of all vehicles that are in the set $\mathcal{S}^*$ and the vehicle that is currently under consideration for confirmation.

\subsection{Approximate Occupancy Time Interval Calculation} \label{sec:imp3}
 As explained in Section 3, ICA checks if an occupancy of a vehicle is conflicting in time with another vehicle's occupancy using occupancy time intervals that can be obtained from each vehicle's DTOT. However, the way to obtain an occupancy time interval presented in the proof of Proposition \ref{prop:dica} is somewhat naive in the sense of computational complexity. In fact, as analyzed in the proof, such an exhaustive search involves computational complexity of $\mathcal{O}(N_m)$. To simplify this computation process, we propose to estimate the occupancy time interval for a certain occupancy based on the vehicle's speed, length, and acceleration rate instead of performing the exhaustive search. To clarify this idea, let us consider an example. For simplicity of explanation, we consider a case when a vehicle is moving in a straight line as shown in Figure \ref{fig:oti}. Let $O^i_k$ be the occupancy for which the DICA algorithm needs to determine the occupancy time interval $I(O^i_k) = [\tau_{lb}(O^i_k), \tau_{ub}(O^i_k)]$, $L(v^i)$ be the vehicle length of the vehicle $v^i$, $h$ be the sampling time interval, $x_k$ be the center position of the $O^i_k$ along the straight line. Then the algorithm first estimates the vehicle's speed and acceleration rate around the occupancy $O^i_k$ from $x_k$, $x_{k-1}$, $x_{k+1}$, and $h$. Occupancies at $x_{k-1}, x_{k+1}$ are very close to the occupancy $O^i_k$ and are not shown in Figure \ref{fig:oti} for simplicity. Specifically, if we let $V_{k^-}(v^i)$ and $V_{k^+}(v^i)$ be the speed of the vehicle $v^i$ from $O^i_{k-1}$ to $O^i_k$ and from $O^i_{k}$ to $O^i_{k+1}$ respectively, then these speeds can be approximated as follows:
\begin{equation} \nonumber
	V_{k^-}(v^i) \approx \frac{x_k - x_{k-1}}{h} , \quad V_{k^+}(v^i) \approx \frac{x_{k+1} - x_{k}}{h}
\end{equation}
From these speeds, we now approximate the acceleration rate of the vehicle as follows:
\begin{equation} \nonumber
	A_k(v^i) \approx \frac{V_{k^+}(v^i) - V_{k^-}(v^i)}{h}
\end{equation}
where $A_k(v^i)$ denotes the acceleration of the vehicle $v^i$ at the occupancy $O^i_k$. If we take the average of the speeds around $O^i_k$, then we can also approximate $V_k(v^i)$ which is the speed of the vehicle $v^i$ at $O^i_k$. Note that since the length of a vehicle $L(v^i)$ is just a few meters in general, the actual motion of the vehicle $v^i$ within the occupancy $O^i_k$ can be approximated fairly accurately by $V_k(v^i)$ and $A_k(v^i)$. 

Now, since it is a straightforward process to estimate $\tau_{lb}(O^i_k)$ and $\tau_{ub}(O^i_k)$ from $L(v^i)$,  $V_k(v^i)$, and $A_k(v^i)$, we omit the details of these calculations in this paper. For the case when the vehicle is moving on a curved path, we can still use the same method to approximate $V_k(v^i)$ and $A_k(v^i)$. But, in this case, we may need to add a short extra distance to the $L(v^i)$ to estimate $\tau_{lb}(O^i_k)$ and $\tau_{ub}(O^i_k)$ more accurately. Such an extra distance can be simply determined by the curvature of the path that is represented by the DTOT of a vehicle. 
Finally, if we apply this approximation method for an occupancy time interval calculation in the \verb+getOTI()+ function, then the computational complexity of the function \verb+getCV()+ improves from $\mathcal{O} (n^2 N_m^3)$ to $\mathcal{O}(n^2 N_m^2)$.  
\begin{figure}
	\centering
	\includegraphics[width=3in]{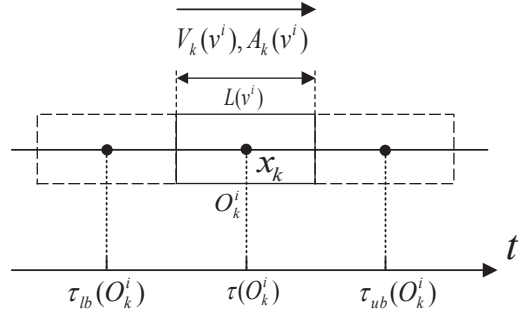}
	\caption{Approximate occupancy time interval calculation for a vehicle with the through route}
	\label{fig:oti}
\end{figure}

\subsection{Efficient Occupancies Comparison} \label{sec:imp4}
In addition to all the techniques described above, the overall computational complexity of the Algorithm \ref{code:main} can be improved further if we employ an efficient searching method such as the bisection method in the process of time-conflict checking between two DTOT$^*$s.
If we employ this bisection approach for time-conflict checking as shown in Algorithm \ref{code:Enhanced_GetCV}, then the computational complexity of the function \verb+getCV()+ can be improved significantly from $\mathcal{O}(n^2 N_m^3)$ to $\mathcal{O}( n^2 N_m^2 \log_2 N_m)$. 


All of the improvement techniques discussed in this section are incorporated into the function \verb+getCV()+ to improve the overall computational complexity of the space-time conflict checking process. Algorithm \ref{code:Enhanced_GetCV} shows this modified \verb+getCV()+ function which is now called \verb+enhanced_getCV()+. In Algorithm \ref{code:Enhanced_GetCV}, $\mathcal{S}^*$ represents the set of already confirmed vehicles that is obtained from the process in Section \ref{sec:imp1} and DTOT$^*$ represents the subset of original DTOT for a vehicle that can be obtained from the approach in Section \ref{sec:imp2}. The function \verb+getOTI()+ within the \verb+while+ loop is now replaced by the new function \verb+getEstOTI()+ that calculates the occupancy time interval approximately as described in Section \ref{sec:imp3}. Lastly, the approach for efficient time conflict checking that is presented in Section \ref{sec:imp4} is implemented throughout the \verb+while+ loop of the DICA algorithm.

\begin{algorithm}[!htb]   
	\caption{enhanced\_getCV$(\mathcal{S}^*, DTOT(v^i))$}
		\begin{algorithmic}[1]
		\STATE $\mathcal{C} = \emptyset$
		\FOR {$v^j$ in $\mathcal{S}^*$}
		\FOR {$O^j_{k_j}$ in $DTOT^*(v^j)$} \label{O}
		\IF {$v^j$ not in $\mathcal{C}$}
		\STATE $high = \vert DTOT^*(v^i) \vert - 1$
		\STATE $low = 0$
		\WHILE {$low \ne high$}
		\STATE $middle = {(high + low)/ 2}$
		
		\STATE Call getEstOTI($O^j_{k_j})$ $\rightarrow I(O^j_{k_j}) := [\tau_{lb}({O}^j_{k_j}), \tau_{ub}({O}^j_{k_j})]$
		\STATE Call getEstOTI($O^i_{middle}$) $\rightarrow I(O^i_{middle}) := [\tau_{lb}({O}^i_{middle}), \tau_{ub}({O}^i_{middle})]$
		\IF {$I({O}^j_{k_j}) \cap I({O}^i_{middle}) \neq \emptyset$}
		\STATE Assign $\tau_{lb}({O}^j_{k_j}) \rightarrow v^j.firstTimeAtCollision$
		\STATE Push $v^j$ into $\mathcal{C}$
		\STATE Sort $\mathcal{C}$ in ascending order of \emph{firstTimeAtCollision}
		\ELSIF{$\tau({O}^j_{k_j}) > \tau({O}^i_{middle})$}
		\STATE $low = middle$
		\ELSIF{$\tau({O}^j_{k_j}) < \tau({O}^i_{middle})$}
		\STATE $high = middle$
		\ENDIF

		\ENDWHILE
		\ENDIF
		\ENDFOR
		\ENDFOR

	\end{algorithmic}
	
	\label{code:Enhanced_GetCV}
\end{algorithm}

\begin{prop} \label{prop:edica}
Enhanced DICA has $\mathcal{O}(\alpha n^2 L_m  \log_2 L_m)$ computational complexity where $\alpha := \alpha_1^2 \alpha_2 \ll 1$, $n$ is the number of vehicles already confirmed to cross an intersection, and $L_m$ is the maximum length of intersection crossing routes in an intersection.
\end{prop}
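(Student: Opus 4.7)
The plan is to re-run the complexity argument of Proposition~\ref{prop:dica}, substituting in each of the four savings from Section~\ref{sec:improv} at the appropriate place. Structurally, the enhanced DICA has exactly the same skeleton as the original: it enters the \verb+while+ loop of Algorithm~\ref{code:main} and repeatedly calls \verb+enhanced_getCV()+ followed by \verb+updateDTOT()+ until the returned set $\mathcal{C}$ is empty. So I would decompose the total cost into (i) the per-call cost of \verb+enhanced_getCV()+, (ii) the per-call cost of \verb+updateDTOT()+, which as in Proposition~\ref{prop:dica} is dominated by \verb+enhanced_getCV()+ and contributes only $\mathcal{O}(N_m)$, and (iii) the number of iterations of the \verb+while+ loop.

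First I would compute the per-call cost of \verb+enhanced_getCV()+ by folding the improvements in one layer at a time. The outer \verb+for+ loop of Algorithm~\ref{code:Enhanced_GetCV} now ranges over $\mathcal{S}^*$ instead of $\mathcal{S}$, contributing $\tilde{n} = \alpha_1 n$ iterations (Section~\ref{sec:imp1}). The inner \verb+for+ loop ranges over $DTOT^*(v^j)$ instead of $DTOT(v^j)$, giving $\tilde{N}_m = \alpha_2 N_m$ iterations (Section~\ref{sec:imp2}). For each such occupancy the bisection search over $DTOT^*(v^i)$ contributes only $\mathcal{O}(\log_2 \tilde{N}_m) = \mathcal{O}(\log_2 N_m)$ probes (Section~\ref{sec:imp4}), and each probe now calls \verb+getEstOTI()+, whose cost is $\mathcal{O}(1)$ instead of $\mathcal{O}(N_m)$ (Section~\ref{sec:imp3}). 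After checking that the $\mathcal{O}(\tilde{n}\log_2 \tilde{n})$ cost of keeping $\mathcal{C}$ sorted by \emph{firstTimeAtCollision} is absorbed by the outer factors, one call to \verb+enhanced_getCV()+ costs $\mathcal{O}(\alpha_1\alpha_2\, n\, N_m \log_2 N_m)$.

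Next I would bound the number of \verb+while+ loop iterations. As in Proposition~\ref{prop:dica}, each iteration removes at least one vehicle from further consideration, but now $\mathcal{C} \subseteq \mathcal{S}^*$, so the loop runs at most $\tilde{n} = \alpha_1 n$ times rather than $n$. This is where the \emph{second} factor of $\alpha_1$ enters. Multiplying, the total cost of the loop is
\[
\mathcal{O}\!\left(\alpha_1 n \cdot \alpha_1\alpha_2\, n\, N_m \log_2 N_m\right) = \mathcal{O}\!\left(\alpha_1^2\alpha_2\, n^2 N_m \log_2 N_m\right) = \mathcal{O}\!\left(\alpha\, n^2 N_m \log_2 N_m\right),
\]
which dominates the $\mathcal{O}(N_m)$ conversion in line~\ref{convert} and the $\mathcal{O}(\tilde{n} N_m)$ cost of \verb+checkFV()+ on the reduced set $\mathcal{S}^*$.

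Finally, exactly as in the last paragraph of the proof of Proposition~\ref{prop:dica}, I would replace $N_m$ by its kinematic upper bound $\bar{N}_m$ in terms of $L_m$. The two cases illustrated in Figure~\ref{fig:complexity} give $\bar{N}_m$ linear in $L_m$ when $L_m$ is long enough for a vehicle to reach $v_m$, and proportional to $\sqrt{L_m}$ otherwise; taking the dominant linear case and substituting into both the $N_m$ and the $\log_2 N_m$ factors yields the claimed bound $\mathcal{O}(\alpha\, n^2 L_m \log_2 L_m)$. The main bookkeeping pitfall I anticipate is keeping the two sources of $\alpha_1$ distinct — one arising from the reduced size of $\mathcal{S}^*$ inside a single call to \verb+enhanced_getCV()+, and a second from the reduced number of \verb+while+ loop iterations — since conflating them would yield a single $\alpha_1$ instead of $\alpha_1^2$. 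That loop-iteration count is the step most worth verifying carefully; everything else is a routine substitution into the scaffolding that already underpins Proposition~\ref{prop:dica}.
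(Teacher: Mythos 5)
Your proposal is correct and follows essentially the same route as the paper's own proof: the same per-call accounting for \verb+enhanced_getCV()+ ($\alpha_1 n$ outer iterations, $\alpha_2 N_m$ inner occupancies, $\mathcal{O}(\log_2 N_m)$ bisection probes with $\mathcal{O}(1)$ \verb+getEstOTI()+ calls), the same second factor of $\alpha_1$ from the reduced number of \verb+while+-loop iterations, and the same final substitution of $L_m$ for $N_m$ via the kinematic bound from Proposition~\ref{prop:dica}. Your explicit separation of the two sources of $\alpha_1$ is, if anything, a slightly cleaner presentation of the same bookkeeping.
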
 
\begin{proof}
First, note that the only part in Algorithm \ref{code:main} that is affected by this proposed enhancement is that the number of confirmed vehicles to be considered for a space-time conflict check is reduced from $n = \vert \mathcal{S} \vert$ to $\tilde{n} = \vert \mathcal{S}^* \vert$ where $\tilde{n} = \alpha_1 n$ and $\alpha_1 \in (0, 1]$. Thus, in Algorithm \ref{code:main}, the functions \verb+enhanced_getCV()+ and \verb+updateDTOT()+ are now called $\alpha_1 n$ times. 
Next, we also note that, since nothing is changed due to this improvement in the \verb+updateDTOT()+ function whose computational complexity is already significantly lower than that of the function \verb+getCV()+, it suffices to analyze the computational complexity of the function \verb+enhanced_getCV()+ presented in Algorithm \ref{code:Enhanced_GetCV} for the overall computational complexity of the enhanced DICA. 

Now, as one can see in Algorithm \ref{code:Enhanced_GetCV}, the entire block within the outer \verb+for+ loop is executed for $\alpha_1 n$ times since the number of confirmed vehicles to be checked for a space-time conflict with the vehicle $v^i$ is reduced from $n$ to $\alpha_1 n$ due to the approach discussed in Section \ref{sec:imp1}. Then, within the \verb+for+ loop, for each vehicle $v^j$ in the set $\mathcal{S}^*$, occupancies from each vehicle's DTOT are evaluated for space and time conflict which \emph{typically} requires $N_m^2$ times occupancy comparison operation where $N_m$ is the maximum number of occupancies in a vehicle's DTOT. However, in the \verb+enhanced_getCV()+ function, we first note that the maximum number of occupancies for each vehicle's DTOT to be tested for space-time conflict is reduced from $N_m$ to $\tilde{N}_m$ where $\tilde{N}_m = \alpha_2 N_m$ and $\alpha_2 \in (0, 1]$ due to the approach presented in Section \ref{sec:imp2}. Another important improvement is that the computational complexity for the occupancy time interval calculation is improved from $\mathcal{O}(N_m)$ to $\mathcal{O}(1)$ within another enhanced function \verb+getEstOTI()+ as discussed in Section \ref{sec:imp3}. Therefore, the overall computational complexity of the outer \verb+for+ loop can be estimated as $\mathcal{O}(\alpha_1 \alpha_2^2 n N_m^2)$. However, note that this is the case when we use the same occupancies comparison method as used in the original \verb+getCV()+ function. As shown in Algorithm \ref{code:Enhanced_GetCV}, the process of occupancies comparison is now performed based on the bisection search method. Roughly speaking, for given $n$ and $N_m$, this efficient search method improves the overall computational complexity of the function from $\mathcal{O}(n N_m^2)$ to $\mathcal{O}(n N_m \log_2 N_m)$ as discussed in Section \ref{sec:imp4}. If we combine this and others discussed above for the overall computational complexity of the \verb+enhanced_getCV()+ function, then we have $\mathcal{O}(\alpha_1 \alpha_2 n N_m \log_2 N_m)$. Recall that the \verb+enhanced_getCV()+ function is called at $\alpha_1 n$ times in the main \verb+while+ loop as discussed above, we have $\mathcal{O}(\alpha_1^2 \alpha_2 n^2 N_m \log_2 N_m)$ as the overall computational complexity of DICA.

As we have analyzed already in the proof of Proposition \ref{prop:dica}, $N_m$ is linearly proportional to the maximum length of intersection crossing routes $L_m$. Hence, if we substitute $L_m$ for $N_m$, then we finally have $\mathcal{O}(\alpha n^2 L_m \log_2 L_m)$ as the overall computational complexity of enhanced DICA where $\alpha := \alpha_1^2 \alpha_2 \ll 1$. 
\end{proof}

\section{Simulation} \label{sec:sim}

In this section, we present simulation results that demonstrate the improved performance of the enhanced DICA over the original algorithm. The performance of the enhanced algorithm is also compared with that of an optimized traffic light intersection control.
\subsection{Simulation Setup}
To evaluate the performance of the original DICA and the enhanced DICA, we implemented both algorithms in a microscopic road traffic simulation software, called the Simulation of Urban MObility (SUMO) \cite{krajzewicz2012recent}, and performed extensive intersection traffic simulations. In our simulation, the simulated situation is an intersection crossing traffic on a typical isolated four way intersection with three incoming lanes, one of which is a dedicated lane for left-turning vehicles, and two outgoing lanes on each road as shown in Figure \ref{fig:screenshot}. We set $70 km/h$ as the maximum allowed speed $v_m$ for all incoming vehicles. To make the simulation more realistic, we let vehicles approach an intersection with different speeds when they enter into the communication region of the intersection. Specifically, when a new vehicle is spawned outside of the communication region, we assign the initial speed of the vehicle randomly within the range from 40\% to 100\% of the maximum allowed speed $v_m$. Thus, a vehicle keeps this random initial speed until it enters the communication region and then it either follows another vehicle or is confirmed by ICA with a feasible DTOT.
The maximum acceleration ($a_{max}$) and deceleration ($a_{min}$) rates for vehicles that are used in simulations are $2 m/s^2$ and $4.5 m/s^2$, respectively. The size of a vehicle used in simulations is $5$ meters long and $1.8$ meters wide. Since, in some cases, a vehicle may need to stop just before the entrance line of the intersection region to avoid collisions with other vehicles, the distance from the entrance line of the communication region to the entrance line of the intersection region should be long enough so that a vehicle can stop from its maximum speed $v_m$. Thus, from the value used for $v_m = 70 km/h$ and the maximum deceleration rate $a_{min} = 4.5 m/s^2$, we need at least $v_m^2/(2 a_{min}) \approx 42.03 m$. So, we use $50 m$ for the distance from the entrance line of the communication region to the entrance line of the intersection region.
The time step that is used in simulation is $0.05$ seconds. In most cases, a simulation terminates when the simulation time reaches 10 minutes. 

In our simulations, vehicles are spawned according to several random variables in order to generate various traffic volumes as well as traffic patterns. Specifically, a Bernoulli random variable $X_{V}$ is used to spawn a vehicle at each simulation time step on each incoming road. In particular, a vehicle is spawned if $X_V = 1$ and not spawned if $X_V = 0$ with $Pr(X_V = 1) = p_V$ where $Pr(E)$ is the probability of an event $E$ and $0 \le p_V \le 1$ is the probability that a vehicle is spawned. Thus, by adjusting $p_V$, we can generate various traffic volumes.  
In addition to this traffic volume variation, we also assign the probabilities for each vehicle to have different routes. We use a three-states random variable $X_P$ to assign an intersection route to a vehicle probabilistically when the vehicle is generated. Three states of the random variable $X_P$ are $\{ Left, Straight, Right\}$ with $Pr(X_P = Left)  = p_L$, $Pr(X_P = Straight)  = p_S$, and $Pr(X_P = Right)  = p_R$ being the probability of turning left, going straight, and turning right to cross an intersection, respectively. As shown in Table \ref{rv_simulation}, we set $p_L = 0.2$, $p_S = 0.6$, and $p_R = 0.2$ for all traffic volume cases to generate $20\%$ of all incoming vehicles for left turing, $60\%$ for going straight, and the other $20\%$ for right turning. 
To create variations on the traffic pattern with this random variable $X_P$, we generate a random number at each simulation time step from the range $[0, 1]$ with uniform distribution. However, a random number generated in programming languages is not completely random in general due to its dependency on the seed number used for random number generation. For example, if we generate a random number with a fixed random seed number, the random number generated at simulation time step $t$ will be the same whenever we run a simulation. Thus, we use several different random seed numbers to generate different traffic patterns. 
Table \ref{rv_simulation} summarizes the parameters used for various traffic volumes and patterns that were used in many of our simulations. As shown in the table, we use three random seeds to generate three different intersection traffic patterns for each traffic volume. Thus, to obtain simulation data for each traffic volume, we run three simulations with different traffic patterns for each simulation and then use the averages of these simulation results as the result for each traffic volume case. 
The intersection crossing traffics generated in most of our simulations are balanced traffics in the sense that the number of vehicles generated in each incoming road are about the same. However, for a simulation to show the starvation free property of the proposed DICA algorithm, the intersection traffic is purposely designed to be unbalanced where the number of vehicles for minor approaching roads is roughly 30\% of the vehicles in major roads.

%
%
%
 \begin{table}[tb]\scriptsize	
 	\centering
 	\vspace{-0.5cm}
 	\caption{Parameters used for various traffic volumes and patterns. ($*$ Expected number of vehicles per 10 minutes.)}
 	\smallskip
 	\label{rv_simulation}
 	\begin{tabular}{c | c }
 		\hline
 		Parameter & Value \\
 		\hline 
 		Traffic volumes$*$ & 100 / 200 / 300 / 400 / 500 \\
 		\hline
 		$p_V$ & 0.03 / 0.06 / 0.08 / 0.11 / 0.14 \\ 
 		\hline
 		$p_L$  & 0.20  \\ 
 		\hline
 		$p_S$  & 0.60   \\ 
 		\hline
 		$p_R$ & 0.20  \\ 
 		\hline
 		Random seeds  & 12 / 21 / 66   \\ 
 		\hline	
 		
 	\end{tabular} 
 \end{table}

In the following discussion on our simulation results, \emph{simulation time} means the simulated time used in simulation program and \emph{computation time}, which will be discussed later in Section \ref{sec:sim:time}, means the actual elapsed time that it takes for a computer to run a simulation. 
Also, in Section \ref{sec:sim:perf}, the traffic control performance of the enhanced DICA is compared with that of a traffic light algorithm with fixed cycles. To have a comparable traffic light program, we computed the optimal signal cycles for different traffic volume cases by using the exponential cycle length model $C_0 = 1.5 L e^{1.8 Y}$ from \cite{cheng2005development}. In the model, $L$ represents the total lost time within the cycle. The lost time for each phase is assumed to be 4 seconds \cite{HCM2000}. Thus, $L = 4 \times 4s = 16s$. $Y$ is the sum of critical phase flow ratios. The duration for the yellow light of each phase is 3 seconds. 

All simulations were run on a Core i7 computer with 3.40GHz, 8GB RAM and Windows system. The interface programs with SUMO were coded in Python.

\begin{figure}[!h]
	\centering
	\includegraphics[width=3.3in]{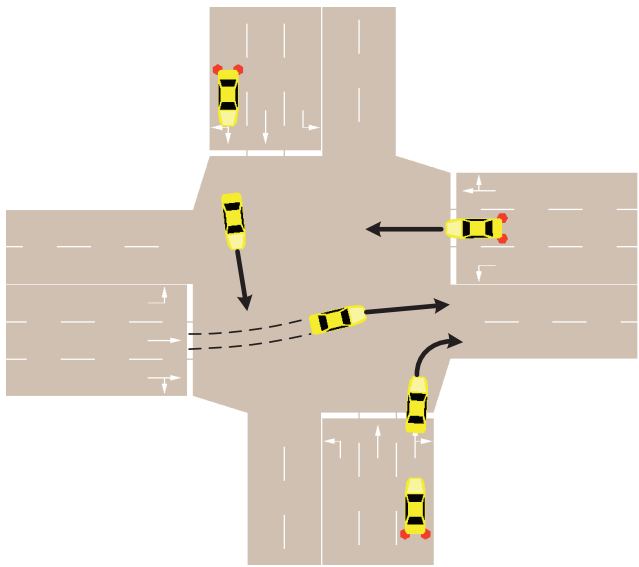}
	\caption[A screenshot of simulation]{A screenshot of simulation which illustrates a situation when vehicles with conflicting routes cross an intersection simultaneously.}
	\label{fig:screenshot}
\end{figure}

\subsection{Simulation Results}

Computation times and performances of three different traffic patterns for all five volume cases are recorded from simulations. 
Figure \ref{fig:screenshot} shows a screenshot of simulation in SUMO when vehicles of different routes are crossing an intersection simultaneously without occurrence of collisions. In this situation, the through vehicle from North goes inside the intersection shortly after the vehicle from West to East clears the conflicting space and the through vehicle from East starts to enter the intersection although the route of the vehicle is conflicting with that of the vehicle from North. Vehicles whose DTOTs are not conflicting with these two can pass the intersection at the same time. For example, the right-turning vehicle from South also crosses the intersection at the same time as the other vehicles in Figure \ref{fig:screenshot}. 


 \subsubsection{Computation Time} \label{sec:sim:time}
Figure \ref{fig:SimulationTime} (a) compares the computation times of the original DICA, the enhanced DICA, and the optimized traffic light algorithm. Figure \ref{fig:SimulationTime} (b) shows how much computational improvement was made through each computational improvement technique discussed in Sections \ref{sec:improv}-A, \ref{sec:improv}-B, \ref{sec:improv}-C, and \ref{sec:improv}-D. Note that since the computational improvement technique in Section\ref{sec:improv}-D is implemented based on the computational improvement technique in Section \ref{sec:improv}-B, we had to combine techniques from both Sections \ref{sec:improv}-D and \ref{sec:improv}-B to show the improvement due to the technique in Section \ref{sec:improv}-D indirectly. 
Here, we show the computation times comparison for only one traffic volume case with 300 vehicles per 10 minutes since the trends for other volume cases are similar. 
The vertical axis in Figure \ref{fig:SimulationTime} is the computation time in hour unit which is represented in logarithmic scale. As shown in Figure \ref{fig:SimulationTime}, the enhanced DICA that implements all improvements discussed in Section \ref{sec:improv} takes significantly less computation time, i.e. only $0.4\%$ computation time of the original algorithm. 
When we apply each computational improvement technique individually, our result shows that it takes about $11\%$ of the computation time of the original DICA with the technique in Section \ref{sec:improv}-A, $59\%$ with the technique in Section \ref{sec:improv}-B, $13\%$ with the technique in Section \ref{sec:improv}-C, and $6\%$ with techniques in Sections \ref{sec:improv}-B and Section \ref{sec:improv}-D together. If we combine all of these individual improvement altogether to estimate the collective improvement, then we have about $0.45\%$ computation time of the original DICA which is similar to the computation time result with the enhanced DICA in which all these techniques are implemented. 
 
\begin{figure}[!t]
	\centering
	\includegraphics[width=6.6in]{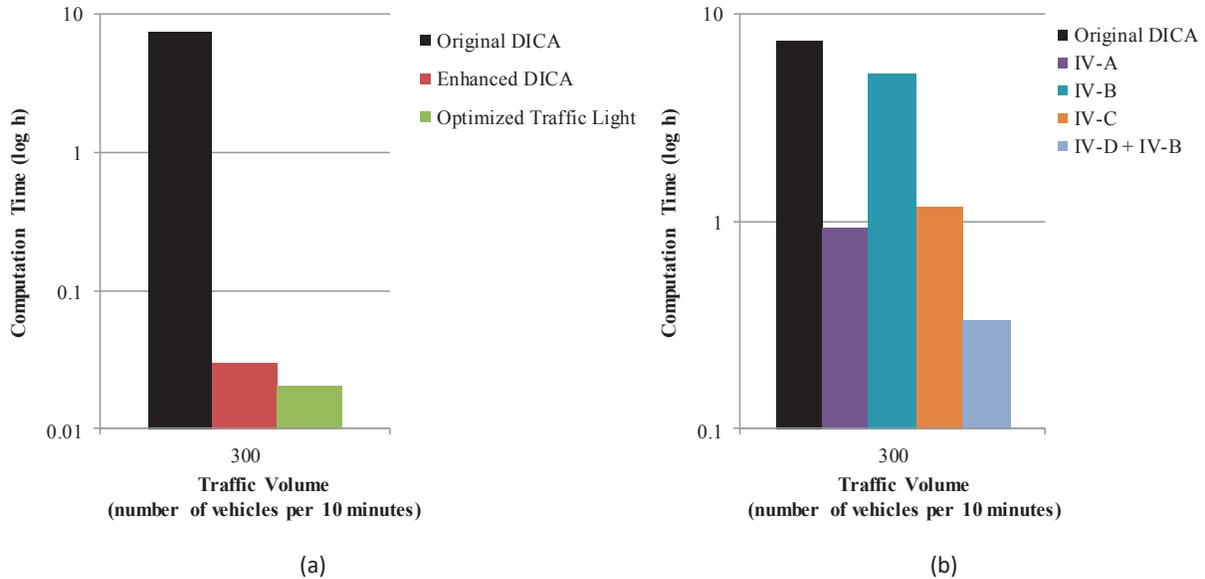}
	\caption[Computation times comparison]{Computation times comparison for traffic volume with 300 vehicles per 10 minutes: (a) original DICA with different algorithms, (b) original DICA with different improvement techniques. (The symbol IV-x represents the improvement technique in Section IV-x where x = $\{$ A, B, C, D $\}$.)}
	\label{fig:SimulationTime}
\end{figure}
 
Table \ref{table_time} compares the computation times between the enhanced DICA and the optimized traffic light algorithm for all five traffic volume cases. 
From the results shown in the table, we note that the computation time for the optimized traffic light algorithm gradually increases as the traffic volume increases. 
However, since the optimized traffic light algorithm has $\mathcal{O}(1)$ computational complexity, its computation time cannot be affected by the number of vehicles around an intersection. Thus, roughly speaking, one can say that the computation time of the optimized traffic light for a particular traffic volume case is in fact the time required for the simulation software SUMO to run a simulation with the number of vehicles for that particular traffic volume case. Therefore, the actual computation time of the enhanced DICA for a particular traffic volume case can be roughly approximated by subtracting the computation time of the optimized traffic light for the case from the computation time of the enhanced DICA presented in the table. For example, for the traffic volume with 500 vehicles, the actual computation time for the enhanced DICA can be approximated as $0.031 (= 0.058 - 0.027)$ hours which is 1.86 minutes.    
Note that this 1.86 minutes is the computation time taken by the algorithm to handle 500 vehicles. Thus this in turn implies that it takes only 0.2232 seconds to handle each vehicle. 
An exception to this approximation is the case with 100 vehicles traffic volume case where the computation time for optimized traffic light takes longer time than that of the enhanced DICA. The reason for this result can be understood by considering the fact that, in such a low traffic volume situation, the average number of vehicles to be simulated by SUMO at each simulation time step is smaller in the enhanced DICA case since vehicles are crossing an intersection much faster without waiting at an intersection under the enhanced DICA than the optimized traffic light as shown in Section \ref{sec:sim:perf}.

 \begin{table}[tb]\scriptsize	
 	\centering
 	\vspace{-0.5cm}
 	\caption{Computation time comparison between enhanced DICA and optimized traffic light}
 	\smallskip
 	\label{table_time}
 	\begin{tabular}{cccccc}
 		\hline 
 		Traffic vsolume &  \multirow{2}{*}{100} &  \multirow{2}{*}{200}& \multirow{2}{*}{300} & \multirow{2}{*}{400} & \multirow{2}{*}{500} \\ 
 		(Number of vehicles per 10 minutes) &  & &  &  &  \\
 		
 		\hline 
 		
 		Optimized Traffic light (h) & 0.014 & 0.017 & 0.020 & 0.024 & 0.027 \\ 
 		Enhanced DICA (h) & 0.011 & 0.024 & 0.026 & 0.042 & 0.058 \\ 
 		
 		\hline 
 		
 	\end{tabular} 
 \end{table}

 \subsubsection{Liveness and Safety}
Although we have theoretically showed the liveness of DICA, it is better to have simulation results that support the theory. Since the simulation in this section is only a verification, we run a simulation with 10, 000 vehicles instead of giving a restriction on the simulation time. The simulation ends after all 10, 000 vehicles have exited the simulation scene. We recorded the number of vehicles that are waiting to cross the intersection at each simulation time step and plot the number profile in Figure \ref{fig:waitToCross}. As shown in the figure, the number of vehicles drops to zero in almost a linear way within a finite time which demonstrates that every vehicle was able to cross the intersection eventually which proves the proposition 1 in Section III-A. 
We also performed a set of simulations for the case of unbalanced traffic situation where the number of vehicles on minor roads is only 30\% of that of major roads to demonstrate the fairness of DICA.
To show the fairness of the algorithm, we recorded the average trip times for major roads and minor roads respectively for every traffic volumes. 
As shown in Table \ref{table}, one can find that the average trip time of the minor roads is about the same as that of the major roads. This shows that there is not a case that some vehicles cannot get confirmation or will experience a very long time to be confirmed which demonstrates the proposition 2 in Section III-A. 

\begin{figure}[!t]
	\centering
	\includegraphics[width=3.3in]{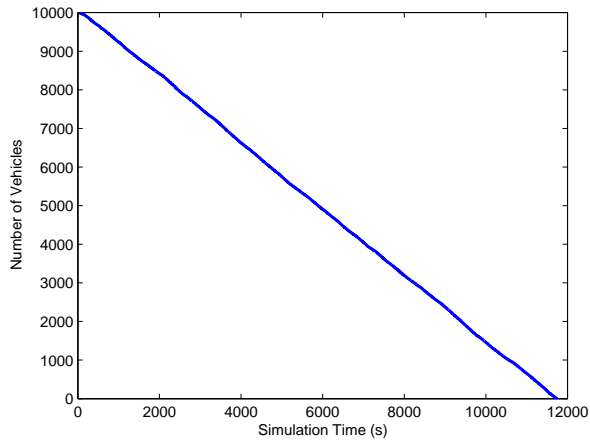}
	\caption{The number of vehicles which wait to cross the intersection over time.}
	\label{fig:waitToCross}
\end{figure}

\begin{table}[h]\scriptsize	
	\centering
	\vspace{-0.5cm}
	\caption{Average trip time comparison between major roads and minor roads in an unbalanced traffic}
	\smallskip
	\label{table}
	\begin{tabular}{cccccc}
		\hline 
		Traffic volume &  \multirow{2}{*}{100} &  \multirow{2}{*}{200}& \multirow{2}{*}{300} & \multirow{2}{*}{400} & \multirow{2}{*}{500} \\ 
		(Number of vehicles per 10 minutes) &  & &  &  &  \\
		
		\hline 

		Average trip time on major roads (s) & 6.17 & 6.60 & 7.38 & 8.15 & 10.15 \\ 
		Average trip time on minor roads (s) & 6.21 & 6.57 & 7.38 & 7.90 & 9.63 \\ 
		
		\hline 
		
	\end{tabular} 
\end{table}

To validate the safety property (i.e., collision freeness) of DICA through simulation, we computed the inter-vehicle distance between every pair of vehicles within an intersection at every second in simulation time. Since each vehicle is represented as a polygon, a 5 $m$ long and 1.8 $m$ wide rectangle more precisely, we obtained this data based on an algorithm of the shortest distance calculation between two polygons. 
A histogram of the recorded inter-vehicle distances is shown in Figure \ref{fig:minDistance}. 
Clearly, the inter-vehicle distance must be less than or equal to zero if two vehicles are in a collision and must be positive otherwise. 
As one can see from the figure, there is no instance observed throughout the entire simulation with less than $1 m$ inter-vehicle distance, which is a clear indication that there is no collision inside the intersection. Note that Figure \ref{fig:minDistance} is demonstrating the safety of the DICA algorithm, the safety problem that vehicles cannot follow confirmed DTOT correctly pertaining to the robustness of DICA will be studied in our future work.
 \begin{figure}[!t]
	\centering
	\includegraphics[width=3.5in]{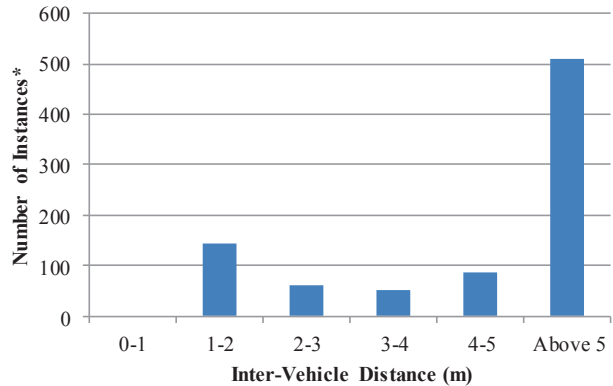}
	\caption[Histogram of the inter-vehicle distance within the intersection.]{Histogram of the inter-vehicle distance within the intersection. (* An instance means the situation when a pair of vehicles are separated by the calculated inter-vehicle distance.)}
	\label{fig:minDistance}
\end{figure}

\subsubsection{Control Performance} \label{sec:sim:perf}
The overall traffic control performance of the enhanced DICA is also evaluated and compared with that of the optimized traffic light algorithm based on the following performance measures. For each vehicle, we recorded the \emph{trip time} that is the time taken for a vehicle from the moment when it enters into the communication region of an intersection until the vehicle completely crosses the intersection region. From the recorded trip time data for all crossed vehicles, we calculated several related statistic information which are the \emph{average trip time} and the \emph{standard deviation of trip time}. Besides these trip time related performance measures, we also calculated the percentage of all crossed vehicles against the total number of generated vehicles, which we call the \emph{throughput}.
However, note that neither the average trip time nor the throughput alone is sufficient to correctly evaluate the performance of an algorithm. In fact, both of these measures should be considered together to correctly compare and evaluate the performances of different intersection traffic control algorithms. For this reason, we calculated the ratio of average trip time to throughput, which we call the \emph{effective average trip time}, and believe that this could show performance of an algorithm better.
Comparison of the performance between the enhanced DICA and the optimized traffic light control algorithm are shown in Figure \ref{fig:performance}. 
From this result, we can see that, since the throughputs of the two algorithms are always similar, the profiles of average trip time and effective average trip time also show similar trends. The enhanced DICA always performs better than optimized traffic light for the first four traffic volume cases. In the case of the traffic volume with 500 vehicles, the average trip time performance of the enhanced DICA becomes closer to that of optimized traffic light. Also, the enhanced DICA has a bit larger standard deviation of trip time than the optimized traffic light. In short, the enhanced DICA performs much better than the optimized traffic light from low to medium traffic volume cases while its performance becomes worse and closer to the performance of the optimized traffic light for heavy traffic volumes.  

\begin{figure}[!t]
	\centering
	\includegraphics[width=6.6in]{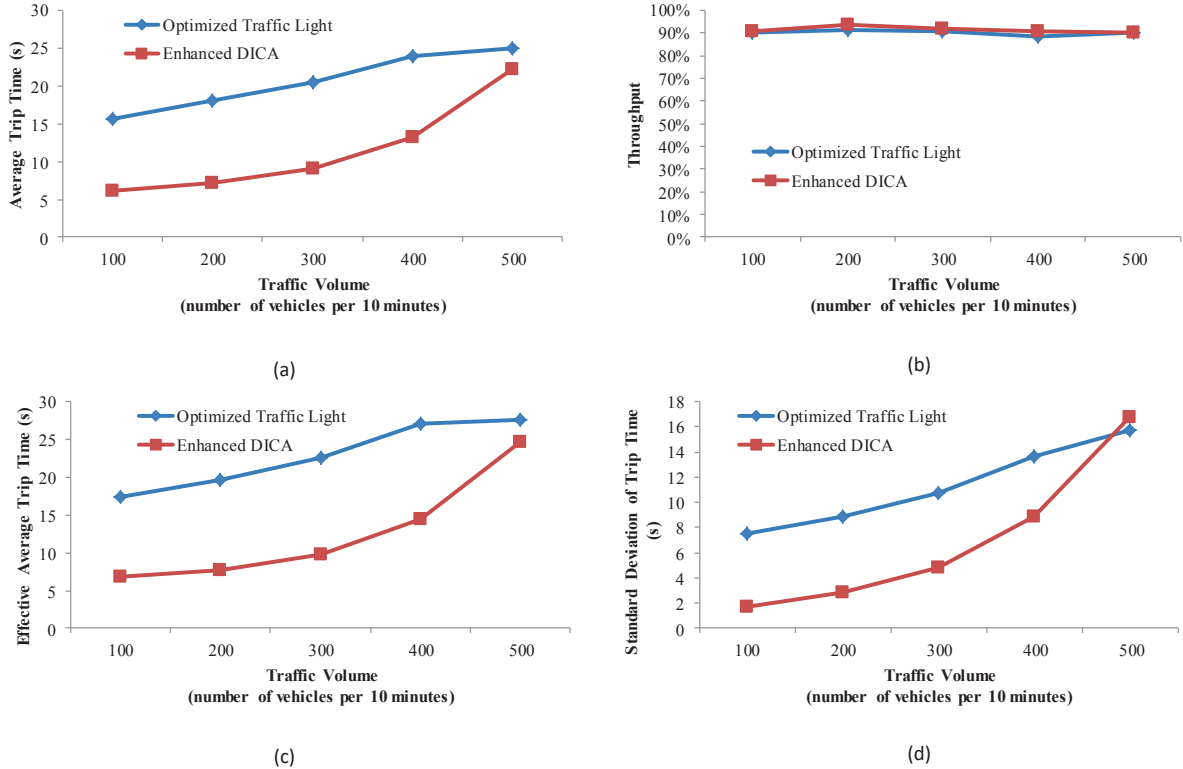}
	\caption[Performance comparison between enhanced DICA and optimized traffic light]{Performance comparison between enhanced DICA and optimized traffic light: (a) average trip time, (b) throughput, (c) effective average trip time, (d) standard deviation of trip time.}
	\label{fig:performance}
\end{figure}

We note that this result is mainly due to the fundamental difference between individual vehicle based traffic coordination algorithms and traffic flow based coordination algorithms. To see this, we can consider a heavy traffic situation when all incoming roads are congested. In such a situation, we know that most vehicles start to cross an intersection at rest when they are allowed to cross the intersection either by green light under traffic light algorithm or confirmation under the proposed DICA. Under a traffic light control, if a vehicle is crossing an intersection, then it is highly likely that a few more following vehicles can also cross the intersection without being stopped. However, in the case when vehicles are controlled by an individual vehicle based coordination algorithm like our enhanced DICA, it is possible to have a situation where vehicles from different roads are permitted alternatively to cross an intersection, which inevitably results in more frequent stops than the case of traffic light control. This is the reason why the enhanced DICA is performing worse and closer to the optimized traffic light in the heavy traffic volume situation.
In fact, this result reveals an important point that, to achieve the best throughput performance, it is necessary to combine both strategies: an individual vehicle based coordination in normal traffic volume and a traffic flow based coordination in congested situation.  
According to this result, we are currently developing algorithms that incorporate the advantage of traffic flow based algorithms when congested into the proposed enhanced DICA.    

Another simulation was performed to validate the transient traffic control performance of DICA when the traffic volume is changing. We run a simulation with 20 minutes long simulation time during which the traffic volume increases from the case of 100 vehicles to 500 vehicles per 10 minutes. 
At each simulation time step, the ratio of the vehicle number generated to the number of vehicles that have exited the intersection, which we call the \emph{flow rate ratio}, was calculated to see how much congestion can occur and also how long it takes to address the congestion. 
The flow rate ratio measured during the simulation time is plotted in Figure \ref{fig:ChangeProb}. In this figure, if the flow rate ratio is close to 1, then it means that all vehicles approached to an intersection have already crossed the intersection and there are no vehicles waiting to cross at that time.  
The simulation time starts from 300s in the figure since the flow rate ratio needs some time to be stable. From the figure, we can also see that before the increase of the traffic volume, the flow rate ratios of the two algorithms are very similar. After 600s at when the traffic volume is changed to the 500 vehicles case, the flow rate ratio of the optimized traffic light increased a lot. Figure \ref{fig:ChangeProb} shows that DICA is more resilient to the change of traffic volume than the optimized traffic light.

\begin{figure}[!t]
	\centering
	\includegraphics[width=3.3in]{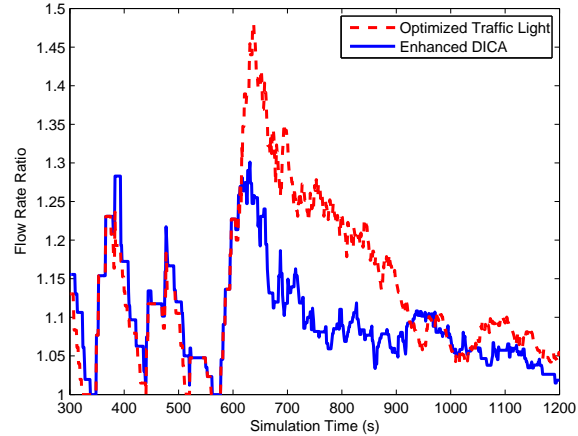}
	\caption{Flow rate ratio when traffic volume changes from 100 to 500.}
	\label{fig:ChangeProb}
\end{figure}

\section{Conclusion} \label{sec:conc}

In this paper, we first introduced our algorithm developed for autonomous and connected intersection traffic management, which is called the discrete-time occupancies trajectory (DTOT) based intersection traffic coordination algorithm (DICA). Subsequently, we showed that the original DICA is deadlock free and also starvation free. We then analyzed the computational complexity of the original DICA and enhanced the algorithm so that it can have better overall computational efficiency. Simulation results show that the computational efficiency of the algorithm is improved significantly after the enhancement and the properties of starvation free and safety are guaranteed. We also validated that the overall throughput performance of our enhanced DICA is better than that of an optimized traffic light control mechanism in case when the traffic is not congested. 
Currently, it is in-progress to integrate the grouping strategy used in traffic flow based intersection control mechanisms into our DICA to achieve the best throughput performance in all traffic volume situations.
We are also working on enhancing the algorithm to deal with sudden emergence of special vehicles such as emergency ambulances or police cars that have the highest priority in real traffic through efficient usage of intersection space. In the future, assumptions like perfect communication, accurate prediction of DTOT will be relaxed and methods to deal with car failures will be studied to make the algorithm more applicable to real situations.
As one of future works, DICA will be generalized to work with mixed traffic where autonomous vehicles and human-driven vehicles coexist.


%
\bibliographystyle{IEEEtran}
\bibliography{computation_complexity}
%

%



\end{document}